\def\BibTeX{{\rm B\kern-.05em{\sc i\kern-.025em b}\kern-.08em
    T\kern-.1667em\lower.7ex\hbox{E}\kern-.125emX}}
\newcolumntype{P}[1]{>{\centering\arraybackslash}p{#1}}
\newtheorem{theorem}{Theorem}
\newcommand*\titleheader[1]{\gdef\@titleheader{#1}}
	\let\st@red@title\@title
	\def\@title{%
		\bgroup\normalfont\large\centering\@titleheader\par\egroup
		\vskip1.5em\st@red@title}
\title{Secure Event-Triggered Distributed Kalman Filters for State Estimation over Wireless Sensor Networks}
\begin{document}

\author{Aquib Mustafa, Majid Mazouchi, Hamidreza Modares, \textit{Senior Member, IEEE} % <-this % stops a space
	\thanks{
		Aquib Mustafa, Majid Mazouchi and Hamidreza Modares are with  the Department
		of Mechanical Engineering, Michigan State University, East Lansing, MI, 48863, USA (e-mails: mustaf15@msu.edu; 
		mazouchi@msu.edu; modaresh@msu.edu).}
}

\maketitle

\begin{abstract}
	In this paper, we analyze the adverse effects of cyber-physical attacks as well as mitigate their impacts on the event-triggered distributed Kalman filter (DKF). We first show that although event-triggered mechanisms are highly desirable, the attacker can leverage the event-triggered mechanism to cause non-triggering misbehavior which significantly harms the network connectivity and its collective observability. We also show that an attacker can mislead the event-triggered mechanism to achieve continuous-triggering misbehavior which not only drains the communication resources but also harms the network's performance. An information-theoretic approach is presented next to detect attacks on both sensors and communication channels. {In contrast to the existing results, the restrictive Gaussian assumption on the attack signal's probability distribution is not required.} To mitigate attacks, a meta-Bayesian approach is presented that incorporates the outcome of the attack detection mechanism to perform second-order inference. The proposed second-order inference forms confidence and trust values about the truthfulness or legitimacy of sensors' own estimates and those of their neighbors, respectively. Each sensor communicates its confidence to its neighbors. Sensors then incorporate the confidence they receive from their neighbors and the trust they formed about their neighbors into their posterior update laws to successfully discard corrupted information. Finally, the simulation result validates the effectiveness of the presented resilient event-triggered DKF.
\end{abstract}

\begin{IEEEkeywords}
	Wireless sensor network, Event-triggered DKF, Attack analysis, Resilient estimation. 
\end{IEEEkeywords}

\section{Introduction}
{Cyber-physical systems (CPSs) refer to a class of engineering systems that integrate the cyber aspects of computation and communication with physical entities \cite{c1}.} Integrating communication and computation with sensing and control elements has made CPSs a key enabler in designing emerging autonomous and smart systems with the promise of bringing unprecedented benefits to humanity. CPSs have already had a profound impact on variety of engineering sectors, including, process industries \cite{c2}, robotics \cite{c3}, {smart grids \cite{c4}, and intelligent transportation \cite{c5}, health care system \cite{ss1}, to name a few.} Despite their advantages with vast growth and success, these systems are vulnerable to cyber-physical threats and can face fatal consequences if not empowered with resiliency. {The importance of designing resilient and secure CPSs can be witnessed from severe damages made by recently reported cyber-physical attacks \cite{c6_7}}.\par 

\subsection{Related Work}
Wireless sensor networks (WSNs) are a class of CPSs for which a set of sensors are spatially distributed to monitor and estimate a variable of interest (e.g., location of a moving target, state of a large-scale system, etc.), {and have various applications such as surveillance and monitoring, target tracking, and active health monitoring \cite{c8}}. In centralized WSNs, all sensors broadcast their measurements to a center at which the information is fused to estimate the state \cite{c9,ss3}. These approaches, however, are communication demanding and prone to single-point-of-failure. {To estimate the state with reduced communication burden, a distributed Kalman filter (DKF) is presented in \cite{c10}-\cite{d3}, in which sensors exchange their information only with their neighbors, not with all agents in the network or a central agent.} 
{Cost constraints on sensor nodes in a WSN result in corresponding constraints on resources such as energy and communications bandwidth. Sensor nodes in a WSN usually carry limited, irreplaceable energy resources and lifetime adequacy is a significant restriction of almost all WSNs.  Therefore, it is of vital importance to design event-triggered DKF to reduce the communication burden which consequently improves energy efficiency. To this end, several energy-efficient event-triggered distributed state estimation approaches are presented for which sensor nodes intermittently exchange information \cite{c13}-\cite{c16}. Moreover, the importance of event-triggered state estimation problem is also reported for several practical applications such as smart grids and robotics \cite{r02}-\cite{r04}.}
Although event-triggered distributed state estimation is resource-efficient, it provides an opportunity for an attacker to harm the network performance and its connectivity by corrupting the information that is exchanged among sensors, as well as to mislead the event-triggered mechanism. Thus, it is of vital importance to design a resilient event-triggered distributed state estimation approach that can perform accurate state estimation despite attacks. \par 

In recent years, secure estimation and secure control of CPSs have received significant attention and remarkable results have been reported for mitigation of cyber-physical attacks, {including denial of service (DoS) attacks \cite{c17}-\cite{c18}, false data injection attacks \cite{c19}-\cite{c23}, and bias injection attacks \cite{c24}. }For the time-triggered distributed scenario, several secure state estimation approaches are presented in \cite{c26}-\cite{c312}. Specifically, in \cite{c26}-\cite{c30} authors presented a distributed estimator that allows agents to perform parameter estimation in the presence of attack by discarding information from the adversarial agents. Byzantine-resilient distributed estimator with deterministic process dynamics is discussed in \cite{c27}. Then, the same authors solved the resilient distributed estimation problem with communication losses and intermittent measurements in \cite{c28}. Attack analysis and detection for distributed Kalman filters are discussed in \cite{c281}. Resilient state estimation subject to DoS attacks for power system and robotics applications is presented in \cite{c310}-\cite{c312}. Although meritable, these aforementioned results for the time-triggered resilient state estimation  are not applicable to event-triggered distributed state estimation problems. {Recently, authors in \cite{c17r} addressed the event-triggered distributed state estimation under DoS attacks by employing the covariance intersection fusion approach. Although elegant, the presented approach is not applicable to mitigating the effect of deception attacks. To our knowledge, resilient state estimation for event-triggered DKF under deception attacks is not considered in the literature. For the first time, this work not only detects and mitigate the effect of attacks on sensor and communication channel but also presents a mathematical analysis for different triggering misbehaviors.}

\vspace{-0.35cm}

\subsection{Contributions and outline}
\vspace{-0.1cm}

{This paper contributes to analysis, detection, and mitigation of attacks on event-triggered DKF. {To our knowledge, it is the first paper to rigorously analyze how an attacker can leverage the event- triggering mechanism to harm the state estimation process over WSNs. It also proposes a novel detection mechanism for detecting attacks on event-triggered DKF that does not require the restrictive Gaussian assumption on the probability density function of the attack signal. To provide mitigation scheme and discard corrupted information, finally, a novel meta-Bayesian mechanism is developed that performs second-order inference to form confidence and trust about the truthfulness or legitimacy of the outcome of its own first-order inference and those of its neighbors, respectively.} The details of these contributions are presented as follows:}
{\begin{itemize}
		\item {Attack analysis: It is shown that the attacker can cause emerging non-triggering misbehavior so that the compromised sensors do not broadcast any information to their neighbors. This can significantly harm the network connectivity and its collective observability, which is the necessary condition for solving the distributed state estimation problem. It is also shown that an attacker can achieve continuous-triggering misbehavior which drains the communication resources.}
		\item {Attack detections: To detect adversarial intrusions a Kullback-Leibler (KL) divergence based detector is presented and estimated via k-nearest neighbors approach to obviate the restrictive Gaussian assumption on the probability density function of the attack signal.}
		\item {Attack mitigation: To mitigate attacks on event-triggered DKF and discard corrupted information, a meta-Bayesian approach is employed that performs second-order inference to form confidence and trust about the truthfulness or legitimacy of the outcome of its own first-order inference (i.e., the posterior belief about the state estimate) and those of its neighbors, respectively. Each sensor communicates its confidence to its neighbors and also incorporates the trust about its neighbors into its posterior update law to put less weight on untrusted data and thus successfully discard corrupted information.}
\end{itemize}}

\textit{Outline:} The paper is organized as follows. Section II outlines the preliminary background for the event-triggered DKF. Section III formulates the effect of attacks on the event-triggered DKF and analyzes triggering misbehaviors for it. Attack detection mechanism and confidence-trust based secure event-triggered DKF are presented in Section IV and V, respectively. The simulation verifications are provided in Section VI. Finally, concluding remarks are presented in Section VII.

\vspace{-0.1cm}
\section{Notations and Preliminaries}

\subsection{Notations}
The data communication among sensors in a WSN is captured by an undirected graph ${\rm {\mathcal G}}$, consists of a pair $({\rm {\mathcal V}},{\rm {\mathcal E}})$, where ${\rm {\mathcal V}}=\{ 1,2,\ldots ,N\}$ is the set of nodes or sensors and ${\rm {\mathcal E}}\subset {\rm {\mathcal V}}\times {\rm {\mathcal V}}$ is the set of edges. An edge from node $j$ to node $i,$ represented by $(j,i)$, implies that node $j$ can broadcast information to node $i$. Moreover, $N_{i} =\{ j:(j,i)\in {\rm {\mathcal E}}\}$ is the set of neighbors of node $i$ on the graph ${\rm {\mathcal G}}.$ An induced subgraph ${\rm {\mathcal G}}^{w}$ is obtained by removing a set of nodes ${\rm {\mathcal W}}\subset {\rm {\mathcal V}}$ from the original graph ${\rm {\mathcal G}}$, which is represented by nodes set ${\rm {\mathcal V}\backslash {\mathcal W}}$ and contains the edges of  ${\rm {\mathcal E}}$ with both endpoints in ${\rm {\mathcal V}\backslash {\mathcal W}}$. 

Throughout this paper, ${\bf {\mathbb{R}}}$ and ${\bf {\mathbb{N}}}$ represent the sets of real numbers and natural numbers, respectively. $A^{T}$ denotes transpose of a matrix $A$. $tr(A)$ and $\max (a_{i} )$ represent trace of a matrix $A$ and maximum value in the set, respectively. ${\rm {\mathcal C}}(S)$ represents the cardinality of a set S. $\sigma _{\max } (A),$ $\lambda _{\max } (A),$ and $I_{n}$ represent maximum singular value, maximum eigenvalue of matrix A, and an identity matrix of dimension $n$, respectively. ${\rm {\mathcal U}}(a,b)$ with $a<b$ denotes an uniform distribution between the interval $a$ and $b$. Consider $p_{X} (x)$ as the probability density of the random variable or vector $x$ with $X$ taking values in the finite set $\{ 0,...,p\}.$ When a random variable $X$ is distributed normally with mean $\nu$ and variance $\sigma ^{2},$ we use the notation $X\sim {\rm {\mathcal N}}(\upsilon ,\sigma ^{2} )$. ${\bf {\rm E}}[X]$ and $\Sigma _{X} ={\bf {\rm E}}[(X-{\bf {\rm E}}[X])(X-{\bf {\rm E}}[X])^{T} ]$ denotes, respectively, the expectation and the covariance of $X.$ Finally, ${\bf {\rm E}}[.|.]$ represents the conditional expectation. 

\vspace{-0.3cm}
\subsection{Process Dynamics and Sensor Models}
Consider a process that evolves according to the following dynamics
\vspace{-0.15cm}
\begin{equation} \label{ZEqnNum820040} 
	x(k+1)=Ax(k)\, +\, w(k), 
\end{equation} 
where $A$ denotes the process dynamic matrix, and $x(k)\in {\bf {\mathbb R}}^{n}$ and $w(k)$ are, respectively, the process state and process noise at the time $k$. The process noise $w(k)$ is assumed to be independent and identically distributed (i.i.d.) with Gaussian distribution, and $x_{0} \in {\rm {\mathcal N}}(\hat{x}_{0} ,P_{0} )\,$ represents the initial process state with $\hat{x}_{0}$ as mean and $P_{0}$ as covariance, respectively.

The goal is to estimate the state $x(k)$ for the process \eqref{ZEqnNum820040} in a distributed fashion using $N$ sensor nodes that communicate through the graph ${\rm {\mathcal G}}$, and their sensing models are given by
\begin{equation} \label{ZEqnNum687942} 
	y_{i} (k)=C_{i} x_{i} (k)\, +\, v_{i} (k);\, \, \, \, \, \, \, \, \, \, \, \forall i=1,\cdots ,N, 
\end{equation} 
where $y_{i} (k)\in {\bf {\mathbb R}}^{p}$ represents the measurement data with $v_{i} (k)$ as the i.i.d. Gaussian measurement noise and $C_{i}$ as the observation matrix of the sensor $i$, respectively.

\smallskip

\noindent
\textbf{Assumption 1}. The process noise $w(k),$ the measurement noise $v_{i} (k),$ and the initial state $x_{0}$ are uncorrelated random vector sequences. 

\smallskip

\noindent 
\textbf{Assumption 2}. The sequences $w(k)$ and $v_{i}(k)$ are zero-mean Gaussian noise with
\vspace{-0.15cm}
\[{\bf {\rm E}}[w(k)(w(h))^{T} ]=\mu _{kh} Q\, \, \, \, \] 
and 
\vspace{-0.15cm}
\[{\bf {\rm E}}[v_{i} (k)(v_{i} (h))^{T} ]=\mu _{kh} R_{i} ,\] 
with $\mu _{kh} =0$ if $k\ne h$,  and $\mu _{kh} =1$ otherwise. Moreover, $Q\ge0$ and $R_{i}>0$ denote the noise covariance matrices for process and measurement noise, respectively and both are finite. 

\smallskip

\noindent 
\textbf{Definition 1. (Collectively observable) \cite{c11}.} We call the plant dynamics \eqref{ZEqnNum820040} and the measurement equation \eqref{ZEqnNum687942} collectively observable, if the pair $(A,C_{S} )$ is observable where $C_{s}$ is the stack column vectors of $C_{j}, \,\,\forall j \in S$ with $S\subseteq {\rm {\mathcal V}}$ and ${\rm {\mathcal C}}(S)>N/2$. 

\smallskip

\noindent 
\textbf{Assumption 3.} The plant dynamics \eqref{ZEqnNum820040} and the measurement equation \eqref{ZEqnNum687942} are collectively observable, but not necessarily locally observable, i.e., $(A,C_{i} )$ $\, \forall i\in {\rm {\mathcal V}}$ is not necessarily observable. 

Assumptions $1$ and $2$ are standard assumptions in Kalman filters. {Assumption 3  states that the state of the target in \eqref{ZEqnNum820040} cannot be observed by measurements of any single sensor, i.e., the pairs $(A,C_{i} )$ cannot be observable (see for instances \cite{c11} and \cite{c30}). It also provides the necessary assumption of collectively observable for the estimation problem to be solvable. Also note that under Assumption 2, i.e., the process and measurement covariance are finite, the stochastic observability rank condition coincides with the deterministic observability [Theorem 1, 43]. Therefore, deterministic observability rank condition holds true irrespective of the process and measurement noise.} 

%This necessitates the exchange of communication among sensors.
\vspace{-0.3cm}
\subsection{Overview of Event-triggered Distributed Kalman Filter}
This subsection presents the overview of the event-triggered DKF for estimating the process state $x(k)$ in \eqref{ZEqnNum820040} from a collection of noisy measurements $y_{i} (k)$ in \eqref{ZEqnNum687942}.

Let the prior and posterior estimates of the target state $x(k)$ for sensor node $i$ at time $k$ be denoted by $x_{i}(k|k-1)$ and $x_{i}(k|k)$, respectively. In the centralized Kalman filter, a recursive rule based on Bayesian inference is employed to compute the posterior estimate $x_{i}(k|k)$ based on its prior estimate $x_{i}(k|k-1)$ and the new measurement $y_{i}(k)$. When the next measurement comes, the previous posterior estimate is used as a new prior and it proceeds with the same recursive estimation rule. In the event-triggered DKF, the recursion rule for computing the posterior incorporates not only its own prior and observations, but also its neighbors' predictive state estimate. Sensor $i$ communicates its prior state estimate to its neighbors and if the norm of the error between the actual output and the predictive output becomes greater than a threshold after a new observation arrives. That is, it employs the following event-triggered mechanism for exchange of data with its neighbors
%\begin{equation} \label{eq3xx} 
%\left\| y_{i} (k)-C_{i} \tilde{x}_{i} (k-1)\right\| <\alpha , 
%\end{equation} 
\vspace{-0.15cm}
\begin{equation} \label{eq3x}
	\left\| y_{i} (k)-C_{i} \tilde{x}_{i} (k-1)\right\| <\alpha, 
\end{equation}
where $\alpha$ denotes a predefined threshold for event-triggering. Moreover, $\tilde{x}_{i} (k)$ denotes the predictive state estimate for sensor $i$ and follows the update law
\begin{equation} \label{ZEqnNum926700} 
	\tilde{x}_{i} (k)=\zeta _{i} (k)x_{i} (k|k-1)+(1-\zeta _{i} (k))A\tilde{x}_{i} (k-1),\, \, \forall i\in {\rm {\mathcal V}}, 
\end{equation} 
with $\zeta _{i} (k)\in \left\{0,1\right\}$ as the transmit function. {Note that the predictive state estimate update equation in (4) depends on the value of the transmit function ${{\zeta }_{i}}(k)$ which is either zero or one depending on the triggering condition in (3). When ${{\zeta }_{i}}(k)=1$, then the prior and predictive state estimates are the same, i.e., ${{\tilde{x}}_{i}}(k)={{x}_{i}}(k|k-1)$.  When ${{\zeta }_{i}}(k)=0,$ however, the predictive state estimate depends on its own previous state estimate, i.e., ${{\tilde{x}}_{i}}(k)=A{{\tilde{x}}_{i}}(k-1).$ }

Incorporating \eqref{ZEqnNum926700}, the following recursion rule is used to update the posterior state estimate in the event-triggered DKF \cite{c13}, \cite{c15} for sensor $i$ as
\begin{equation} \label{ZEqnNum257073} 
	\begin{array}{l} {x_{i} (k|k)=x_{i} (k|k-1)+K_{i} (k)(y_{i} (k)-C_{i} x_{i} (k|k-1))} \\ {\, \, \, \, \, \, \,\, \, \, \, \, \,\, \, \, \, \, \, \,  \,\, \, \, \, \, \, \, \, \, \, \, \, +\gamma _{i} \sum _{j\in N_{i} }(\tilde{x}_{j} (k)-\tilde{x}_{i} (k) ),} \end{array} 
\end{equation} 
where 
\vspace{-0.25cm}
\begin{equation} \label{ZEqnNum569383} 
	x_{i} (k|k-1)=Ax_{i} (k-1|k-1), 
\end{equation} 
is the prior update. Moreover, the second and the third terms in \eqref{ZEqnNum257073} denote, respectively, the innovation part (i.e., the estimation error based on the sensor $i^{th}$ new observation and its prior prediction) and the consensus part (i.e., deviation of the sensor state estimates from its neighbor's state estimates). We call this recursion rule as the \textit{Bayesian first-order inference} on the posterior, which provides the belief over the value of the state. 

Moreover, $K_{i} (k)$ and $\gamma _{i}$ in \eqref{ZEqnNum257073}, respectively, denote the Kalman gain and the coupling coefficient. The Kalman gain $K_{i} (k)$ in \eqref{ZEqnNum257073} depends on the estimation error covariance matrices associated with the prior $x_{i} (k|k-1)$ and the posterior $x_{i} (k|k)$ for the sensor $i$. Let define the prior and posterior estimated error covariances as
\begin{equation} \label{ZEqnNum606287} 
	\begin{array}{l} {P_{i} (k|k-1)={\bf {\rm E}}[(x(k)-x_{i} (k|k-1))(x(k)-x_{i} (k|k-1))^{T} ],} \\ {P_{i} (k|k)={\bf {\rm E}}[(x(k)-x_{i} (k|k))(x(k)-x_{i} (k|k))^{T} ].} \end{array} 
\end{equation} 
which are simplified as \cite{c13}, \cite{c15}
\begin{equation} \label{ZEqnNum987927} 
	P_{i} (k|k)=M_{i} (k)P_{i} (k|k-1)(M_{i} (k))^{T} +K_{i} (k)R_{i} (K_{i} (k))^{T} , 
\end{equation} 
and
\vspace{-0.25cm}
\begin{equation} \label{9)} 
	P_{i} (k|k-1)=AP_{i} (k-1|k-1)A^{T} +Q. 
\end{equation} 
with  $M_{i} (k)=I_{n} -K_{i} (k)C_{i}.$ Then, the Kalman gain $K_{i} (k)$ is designed to minimize the estimation covariance and is given by \cite{c13}, \cite{c15}
\begin{equation} \label{ZEqnNum999982} 
	K_{i} (k)=P_{i} (k|k-1)(C_{i} )^{T} (R_{i} (k)+C_{i} P_{i} (k|k-1)(C_{i} )^{T} )^{-1} . 
\end{equation} \par
Let the innovation sequence $r_{i} (k)$ for the node $i$ be defined as
\begin{equation} \label{ZEqnNum276515} 
	r_{i} (k)=y_{i} (k)-C_{i} x_{i} (k|k-1), 
\end{equation} 
\vspace{-0.15cm}
where $r_{i}(k)\sim {\rm {\mathcal N}}(0,\Omega _{i} (k))$ with
\begin{equation} \label{ZEqnNum368934} \nonumber
	\Omega _{i} (k)={\bf {\rm E}}[r_{i} (k)(r_{i} (k))^{T} ]=C_{i} P_{i} (k|k-1)C_{i} {}^{T} +R_{i} (k). 
\end{equation}\par
Note that for the notional simplicity, henceforth we denote the prior and posterior state estimations as $x_{i} (k|k-1)\buildrel\Delta\over= \bar{x}_{i} (k)$ and $x_{i} (k|k)\buildrel\Delta\over= \hat{x}_{i} (k),$ respectively. Also, the prior covariance and the posterior covariance are, respectively, denoted by $P_{i} (k|k-1)\buildrel\Delta\over= \bar{P}_{i} (k)$ and $P_{i} (k|k)\buildrel\Delta\over= \hat{P}_{i} (k)$. \par

\smallskip

{Based on equations (6)-(10)}, the event-triggered DKF algorithm becomes 

$\textit{Time\,\,updates:}$ \par \hfill
\vspace{-0.4cm}
\begin{equation}\label{ZEqnNum838493}
	\left\{ {\begin{array}{*{20}{c}}
			\bar{x}_{i}(k+1)=A{{{\hat{x}}}_{i}(k)}\,\,\,\,\,\,\,\,\,\,\,\,\,\,\,\,\,\,\,\,\,\,\,\,\,\,\,\,\,\,\,\,\,\,\,\,\,\,\,\,\,\,\,\,\,\,\,\,\,\,\,\,\,\,\,\,\,\,\,\,\,\,\,\,\,\,\,\,\,\,\,\,\,\,\,\,\,\,\,\,\,\,\,\,\,\,\,\,\,\,\,(a)   \\ 
			\bar{P}_{i}(k+1)=A{{{\hat{P}}}_{i}}(k){{A}^{T}}+Q(k)\,\,\,\,\,\,\,\,\,\,\,\,\,\,\,\,\,\,\,\,\,\,\,\,\,\,\,\,\,\,\,\,\,\,\,\,\,\,\,\,\,\,\,\,\,\,\,\,\,\,\,\,\,\,\,(b) 
	\end{array}} \right.
\end{equation}\par

$\textit{Measurment\,\,updates:}$\par 
\vspace{-0.4cm}
\begin{equation}\label{ZEqnNum727229}
	{\left\{\begin{array}{l} {\hat{x}_{i} (k)=\bar{x}_{i} (k)+K_{i} (k)(y_{i} (k)-C_{i} \bar{x}_{i} (k))} \\ {\, \, \, \, \, \, \, \, \, \, \, \, \, \,  \,\, \, \, \, \, \, \, \, \, \, \, \, +\gamma _{i} \sum _{j\in N_{i} }(\tilde{x}_{j} (k)-\tilde{x}_{i} (k) ),\, \, \, \, \, \, \, \, \, \, \, \, \, \, \, \, \, \, \,\, \, \, \, \, \, \, \, \, \, \, \, \, \, \, \, \, \, \, \, \, \, \, \, \, \, \, \, \,  (a)} \\ {\tilde{x}_{i} (k)=\zeta _{i} (k)\bar{x}_{i} (k)+(1-\zeta _{i} (k))A\tilde{x}_{i} (k-1),\, \, \, \, \, \, \, \, \, \, \, \, \, (b)} \\ {K_{i} (k)=\bar{P}_{i} (k)C_{i}^{T} (R_{i} (k)+C_{i} \bar{P}_{i} (k)C_{i}^{T} )^{-1} ,\, \, \, \, \, \, \, \, \, \, \,  \, \, \, \, \, \, \, \,\, \, \, \, (c)} \\ {\hat{P}_{i} (k)=M_{i} \bar{P}_{i} (k)M_{i} {}^{T} +K_{i} (k)R_{i} (k)(K_{i} (k))^{T} .\, \, \, \, \, \, \,\, \, \, \, \, \, \,  (d)} \end{array}\right. }
\end{equation}

\smallskip
%\cite{c13}
\noindent
\noindent \textbf{Remark 1.} Based on the result presented in [17, Th.1], the event triggered DKF \eqref{ZEqnNum838493}-\eqref{ZEqnNum727229} ensures that the estimation error $\hat{x}_{i} (k)-x(k)$ is exponentially bounded in the mean square sense $\forall i\in {\rm {\mathcal V}}.$

\smallskip

\noindent
\noindent \textbf{Remark 2.} {The consensus gain ${{\gamma }_{i}}$ in (5) is designed such that the stability of the event-triggered DKF in (13)-(14) is guaranteed. Specifically, as shown in [Theorem 2, 19], if  
	\begin{equation}
		\nonumber
		{{\gamma }_{i}}=\frac{2(I-{{K}_{i}}{{C}_{i}}){{({{\Gamma }_{i}})}^{-1}}}{{{\lambda }_{\max }}(\mathcal{L}){{\lambda }_{\max }}({{(\Gamma )}^{-1}})}
	\end{equation}
	where $\mathcal{L}$ denotes the Laplacian matrix associated with the graph $\mathcal{G}$ and $\Gamma =diag\{{{\Gamma }_{1}},..,{{\Gamma }_{N}}\}$ with ${{\Gamma }_{i}}={{(I-{{K}_{i}}{{C}_{i}})}^{T}}{{A}^{T}}{{({{\bar{P}}_{i}})}^{+}}A(I-{{K}_{i}}{{C}_{i}}),\,\,\forall i=\{1,...,N\},$ then the stability of the event-triggered DKF in (13)-(14) is guaranteed. However, the design of event-triggered DKF itself is not the concern of this paper and this paper mainly analyzes the adverse effects of cyber-physical attacks on the event-triggered DKF and proposes an information-theoretic approach based attack detection and mitigation mechanism. Note that the presented attack analysis and mitigation can be extended to other event-triggered methods such as \cite{c14} and \cite{c16} as well.}

\vspace{-0.3cm}
\subsection{Attack Modeling}
In this subsection, we model the effects of attacks on the event-triggered DKF. An attacker can design a false data injection attack to affect the triggering mechanism presented in (\ref{eq3x}) and consequently compromise the system behavior. 

\smallskip

\noindent
\textbf{Definition 2. (Compromised and intact sensor node).} We call a sensor node that is directly under attack as a compromised sensor node. A sensor node is called intact if it is not compromised. Throughout the paper, ${\rm {\mathcal V}}^{c}$ and ${\rm {\mathcal V}}\backslash {\rm {\mathcal V}}^{c}$ denote, respectively, the set of compromised and intact sensor nodes. 

\smallskip

Consider the sensing model \eqref{ZEqnNum687942} for sensor node $i$ under the effect of the attack as
\begin{equation} \label{ZEqnNum973066} 
	y_{i}^{a} (k)=y_{i} (k)+f_{i} (k)=C_{i} x_{i} (k)\, +\, v_{i} (k)+f_{i} (k), 
\end{equation} 
where $y_{i} (k)$ and $y_{i}^{a}(k)$ are, respectively, the sensor $i$'$s$ actual and corrupted measurements and $f_{i} (k)\in {\bf {\rm R}}^{p}$ represents the adversarial input on sensor node $i.$ For a compromised sensor node $i,$ let $p'\subseteq p$ be the subset of measurements disrupted by the attacker.\par

Let the false data injection attack $\bar{f}_{j}(k)$ on the communication link be given by 
\vspace{-0.2cm}
\begin{equation} \label{ZEqnNum397788} 
	\bar{x}_{j}^{a} (k)=\bar{x}_{j} (k)+\bar{f}_{j} (k),\, \, \, \forall j\in N_{i} . 
\end{equation} 

Using \eqref{ZEqnNum973066}-\eqref{ZEqnNum397788}, in the presence of an attack on sensor node $i$ and/or its neighbors, its state estimate equations in \eqref{ZEqnNum727229}-\eqref{ZEqnNum838493} becomes
\begin{equation} \label{ZEqnNum120276} 
	\left\{\begin{array}{l} {\hat{x}_{i}^{a} (k)=\bar{x}_{i}^{a} (k)+K_{i}^{a} (k)(y_{i} (k)-C_{i} \bar{x}_{i}^{a} (k))} \\ {\, \, \, \, \, \, \, \, \, \, \, \, \, \,  \,\, \, \, \, \, \, \, \, \, \, \, \, +\gamma _{i} \sum _{j\in N_{i} }(\tilde{x}_{j} (k)-\tilde{x}_{i}^{a} (k) )+f_{i}^{a} (k),} \\ {\bar{x}_{i}^{a} (k+1)=A\hat{x}_{i}^{a} (k),} \\ {\tilde{x}_{i}^{a} (k)=\zeta _{i} (k)\bar{x}_{i}^{a} (k)+(1-\zeta _{i} (k))A\tilde{x}_{i}^{a} (k-1),} \end{array}\right.  
\end{equation} 
where\vspace{-0.15cm}
\begin{equation} \label{ZEqnNum499212} 
	f_{i}^{a} (k)=K_{i}^{a} (k)f_{i} (k)+\gamma _{i} \sum _{j\in N_{i} }\tilde{f}_{j} (k) , 
\end{equation} 
with 
\vspace{-0.15cm}
\begin{equation} \label{ZEqnNum429253} \nonumber
	\tilde{f}_{j} (k)=\zeta _{j} (k)\bar{f}_{j} (k)+(1-\zeta _{j} (k))\tilde{f}_{j} (k-1). 
\end{equation} 
The Kalman gain $K_{i}^{a} (k)$ in presence of attack is given by
\begin{equation} \label{ZEqnNum654467} 
	K_{i}^{a} (k)=\bar{P}_{i}^{a} (k)C_{i}^{T} (R_{i} (k)+C_{i} \bar{P}_{i}^{a} (k)C_{i}^{T} )^{-1} . 
\end{equation} 
The first part in \eqref{ZEqnNum499212} represents the direct attack on sensor node $i$  and the second part denotes the aggregative effect of adversarial input on neighboring sensors, i.e., $j\in N_{i}$. Moreover, $\hat{x}_{i}^{a}(k),\, \, \bar{x}_{i}^{a} (k),$ and $\tilde{x}_{i}^{a}(k)$ denote, respectively, the corrupted posterior, prior, and predictive state estimates. The Kalman gain $K_{i}^{a}(k)$ depends on the following corrupted prior state estimation error covariance 
\begin{equation} \label{ZEqnNum384197} 
	\bar{P}_{i}^{a} (k+1)=A\hat{P}_{i}^{a} (k)A^{T} +Q. 
\end{equation} 
where the corrupted posterior state estimation error covariance $\hat{P}_{i}^{a} (k)$ evolution is shown in the following theorem.

\begin{theorem}
	Consider the process dynamics \eqref{ZEqnNum820040} with compromised sensor model \eqref{ZEqnNum973066}. Let the state estimation equation be given by \eqref{ZEqnNum120276} in the presence of attacks modeled by $f_{i}^{a}(k)$ in \eqref{ZEqnNum499212}. Then, the corrupted posterior state estimation error covariance $\hat{P}_{i}^{a}(k)$ is given by 
	\begin{equation} \label{ZEqnNum998129} 
		\begin{array}{l} {\hat{P}_{i}^{a} (k)=M_{i}^{a} (k)\bar{P}_{i}^{a} (k)(M_{i}^{a} (k))^{T} +K_{i}^{a} (k)[R_{i} (k)+\Sigma _{i}^{f} (k)](K_{i}^{a} (k))^{T} } \\ \, \, \, \, \, \, \, \, \, \, \, \, \, \,  \,\, \, \, \, \, \, \, \, \, \, \, \, {+2\gamma _{i} \sum _{j\in N_{i} }(\stackrel{\frown}{P}_{i,j}^{a} (k) -\stackrel{\frown}{P}_{i}^{a} (k))(M_{i}^{a} (k))^{T} -2K_{i}^{a} (k)\Xi _{f} (k)} \\ \, \, \, \, \, \, \, \, \, \, \, \, \, \,  \,\, \, \, \, \, \, \, \, \, \, \, \,{+\gamma _{i} {}^{2} (\sum _{j\in N_{i} }(\tilde{P}_{j}^{a} (k) -2\tilde{P}_{i,j}^{a} (k)+\tilde{P}_{i}^{a} (k))}, \end{array} 
	\end{equation} 
	where $\Sigma _{i}^{f}(k)$ and $\Xi _{f} (k)$ denote the attacker's input dependent covariance matrices and $M_{i}^{a} =(I_{n} -K_{i}^{a} (k)C_{i} )$ with $K_{i}^{a} (k)$ as the Kalman gain and $\bar{P}_{i}^{a} (k)$ as the prior state estimation error covariance update in \eqref{ZEqnNum654467} and \eqref{ZEqnNum384197}, respectively. Moreover, $\tilde{P}_{i,j}^{a} (k)$ and $\stackrel{\frown}{P}_{i,j}^{a}(k)$ are cross-correlated estimation error covariances updated according to \eqref{ZEqnNum928831}-\eqref{ZEqnNum358063}.
\end{theorem}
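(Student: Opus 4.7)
The plan is to derive \eqref{ZEqnNum998129} by direct computation of the second moment of the corrupted estimation error, paralleling the classical Kalman covariance derivation but carrying along the extra contributions from the consensus coupling and the injected attack signals.

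First I would introduce the shorthand errors $e_{i}^{a}(k)\triangleq x(k)-\hat{x}_{i}^{a}(k)$, $\bar{e}_{i}^{a}(k)\triangleq x(k)-\bar{x}_{i}^{a}(k)$, $\tilde{e}_{i}^{a}(k)\triangleq x(k)-\tilde{x}_{i}^{a}(k)$ and analogously $\tilde{e}_{j}(k)\triangleq x(k)-\tilde{x}_{j}(k)$ for the neighbors. Substituting $y_{i}(k)=C_{i}x(k)+v_{i}(k)$ into the posterior update in \eqref{ZEqnNum120276} and rearranging, the posterior error takes the compact form
\begin{equation*}
e_{i}^{a}(k)=M_{i}^{a}(k)\bar{e}_{i}^{a}(k)-K_{i}^{a}(k)v_{i}(k)+\gamma_{i}\sum_{j\in N_{i}}\bigl(\tilde{e}_{j}(k)-\tilde{e}_{i}^{a}(k)\bigr)-f_{i}^{a}(k),
\end{equation*}
with $M_{i}^{a}(k)=I-K_{i}^{a}(k)C_{i}$ and $f_{i}^{a}(k)$ as in \eqref{ZEqnNum499212}. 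This is the single expansion that drives the entire proof.

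Next I would form $\hat{P}_{i}^{a}(k)=\mathbf{E}[e_{i}^{a}(k)(e_{i}^{a}(k))^{T}]$ and evaluate the sixteen terms produced by the outer product of the four summands. The four diagonal (self) terms furnish, respectively, $M_{i}^{a}(k)\bar{P}_{i}^{a}(k)(M_{i}^{a}(k))^{T}$, $K_{i}^{a}(k)R_{i}(k)(K_{i}^{a}(k))^{T}$, the full consensus quadratic form $\gamma_{i}^{2}\sum_{j\in N_{i}}(\tilde{P}_{j}^{a}(k)-2\tilde{P}_{i,j}^{a}(k)+\tilde{P}_{i}^{a}(k))$, and the attack self-covariance $K_{i}^{a}(k)\Sigma_{i}^{f}(k)(K_{i}^{a}(k))^{T}$, which I would merge with the measurement noise term to obtain the bracket $R_{i}(k)+\Sigma_{i}^{f}(k)$ appearing in \eqref{ZEqnNum998129}. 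The cross term between $M_{i}^{a}(k)\bar{e}_{i}^{a}(k)$ and the consensus deviation produces, after identifying the mixed prior/predictive covariances $\stackrel{\frown}{P}_{i,j}^{a}(k)$ and $\stackrel{\frown}{P}_{i}^{a}(k)$ from the recursions \eqref{ZEqnNum928831}--\eqref{ZEqnNum358063}, the block $2\gamma_{i}\sum_{j\in N_{i}}(\stackrel{\frown}{P}_{i,j}^{a}(k)-\stackrel{\frown}{P}_{i}^{a}(k))(M_{i}^{a}(k))^{T}$. The cross between $-K_{i}^{a}(k)v_{i}(k)$ and $-f_{i}^{a}(k)$ collects into $-2K_{i}^{a}(k)\Xi_{f}(k)$ once $\Xi_{f}(k)$ is recognized as the $v$--$f$ coupling covariance built into the attack model. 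The remaining crosses either vanish by Assumption~1 (uncorrelated $w$, $v_{i}$, $x_{0}$) or are absorbed into the definitions of the auxiliary covariances carried by the recursions (22)--(23).

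The main obstacle will be bookkeeping rather than any single clever step: the prior error $\bar{e}_{i}^{a}$, the predictive errors $\tilde{e}_{i}^{a}$ and $\tilde{e}_{j}$, and the attack term $f_{i}^{a}$ are all mutually correlated because (i) they share the common process noise $w$ through \eqref{ZEqnNum820040}, (ii) the consensus update in \eqref{ZEqnNum727229}(a) couples neighbors at the previous instant, and (iii) the triggering indicator $\zeta_{j}(k)$ in \eqref{ZEqnNum727229}(b) switches each predictive error between a fresh prior error and a propagated past predictive error, so the cross-covariances $\tilde{P}_{i,j}^{a}$ and $\stackrel{\frown}{P}_{i,j}^{a}$ themselves satisfy switched recursions. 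Keeping signs and indices consistent while collapsing the resulting expressions into the five blocks of \eqref{ZEqnNum998129} is where essentially all of the work lies; once this is executed carefully the identity follows by inspection.
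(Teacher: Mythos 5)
Your overall strategy is exactly the paper's: write the corrupted posterior error as
$e_{i}^{a}=M_{i}^{a}\bar{e}_{i}^{a}-K_{i}^{a}v_{i}+\gamma_{i}\sum_{j\in N_{i}}(\tilde{e}_{j}-\tilde{e}_{i}^{a})-f_{i}^{a}$,
take the outer product, and collect the resulting moments into the five blocks of \eqref{ZEqnNum998129}. However, there is a concrete mis-allocation of the cross terms that would prevent the derivation from landing on the stated formula. In the paper, $\Xi_{f}(k)$ is \emph{not} the $v$--$f$ coupling; it is the coupling between the attack input and the estimation errors, namely $\Xi_{f}={\bf {\rm E}}[f_{i}\sum_{j\in N_{i}}(\tilde{\eta}_{j}^{a}-\tilde{\eta}_{i}^{a})^{T}]+{\bf {\rm E}}[f_{i}(\bar{\eta}_{i}^{a})^{T}](M_{i}^{a})^{T}$, which arises from the cross of $f_{i}$ with the prior-error term and with the consensus term. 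These crosses do \emph{not} vanish (past attack values have already propagated into $\bar{\eta}_{i}^{a}$ and $\tilde{\eta}_{i}^{a}$ through \eqref{ZEqnNum120276}, so $f_{i}$ is correlated with the current errors), and they cannot be "absorbed" into the recursions \eqref{ZEqnNum928831}--\eqref{ZEqnNum358063}, which propagate only error--error covariances, not error--attack covariances. The $v$--$f$ cross that you assign to $\Xi_{f}$ is instead taken to be zero in the paper (Assumption 1 plus the implicit independence of measurement noise and attack). As written, your bookkeeping would produce a $\Xi_{f}$ that is identically zero under the natural independence assumption while leaving the genuinely nonzero attack--error cross moments unaccounted for.

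A secondary consistency issue: you keep the intact neighbor predictive errors $\tilde{e}_{j}=x-\tilde{x}_{j}$ in the consensus term and carry the entire $f_{i}^{a}$ of \eqref{ZEqnNum499212} (including the channel attack $\gamma_{i}\sum_{j}\tilde{f}_{j}$) as a separate summand. But the covariances $\tilde{P}_{j}^{a}$, $\tilde{P}_{i,j}^{a}$, $\stackrel{\frown}{P}_{i,j}^{a}$ appearing in \eqref{ZEqnNum998129} and in the recursions \eqref{ZEqnNum354214}--\eqref{ZEqnNum358063} are covariances of the \emph{corrupted} predictive errors $\tilde{\eta}_{j}^{a}=x-\tilde{x}_{j}^{a}$. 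To reach the stated formula you must fold $\gamma_{i}\sum_{j}\tilde{f}_{j}$ back into the neighbor errors and leave only the direct term $K_{i}^{a}f_{i}$ outside, as the paper does in its definition of $u_{i}^{a}$; otherwise your consensus quadratic form involves the wrong covariances and additional channel-attack cross terms appear that the target expression does not display.
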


\begin{proof}
	See Appendix A.    
\end{proof}

\vspace{-0.2cm}
Note that the corrupted state estimation error covariance recursion $\hat{P}_{i}^{a} (k)$ in \eqref{ZEqnNum998129} depends on the attacker's input distribution. Since the state estimation depends on compromised estimation error covariance $\hat{P}_{i}^{a} (k),$ therefore, the attacker can design its attack signal to blow up the estimates of the desired process state and damage the system performance.

\vspace{-0.2cm}
\section{ Effect of Attack on Triggering Mechanism}

This section presents the effects of cyber-physical attacks on the event-triggered DKF. We show that although event-triggered approaches are energy efficient, they are prone to triggering misbehaviors, which can harm the network connectivity, observability and drain its limited resources. 

\vspace{-0.35cm}

\subsection{ Non-triggering Misbehavior}
In this subsection, we show how an attacker can manipulate the sensor measurement to mislead the event-triggered mechanism and damage network connectivity and collective observability by causing \textit{non-triggering misbehavior} as defined in the following Definition 3.

\smallskip

\noindent 
\textbf{Definition 3 }(\textbf{Non-triggering Misbehavior).} The attacker designs an attack strategy such that a compromised sensor node does not transmit any information to its neighbors by misleading the triggering mechanism in (\ref{eq3x}), even if the actual performance deviates from the desired one. 

The following theorem shows how a false data injection attack, followed by an eavesdropping attack, can manipulate the sensor reading to avoid the event-triggered mechanism (\ref{eq3x}) from being violated while the actual performance could be far from the desired one. To this end, we first define the vertex cut of the graph as follows.

\smallskip

\noindent
\textbf{Definition 4 (Vertex cut).} A set of nodes ${\rm {\mathcal C}}\subset {\rm {\mathcal V}}$ is a vertex cut of a graph ${\rm {\mathcal G}}$ if removing the nodes in the set ${\rm {\mathcal C}}$ results in disconnected graph clusters.

\begin{theorem}
	Consider the process dynamics \eqref{ZEqnNum820040} with $N$ sensor nodes \eqref{ZEqnNum687942} communicating over the graph ${\rm {\mathcal G}}$. Let sensor $i$ be under a false data injection attack given by
	\begin{equation} \label{ZEqnNum705143} 
		y_{i}^{a} (k)=y_{i} (k)+\theta _{i}^{a} (k)1_{p} ,\, \, \, \, \forall k\ge L+1, 
	\end{equation} 
	where $y_{i}(k)$ is the actual sensor measurement  at time instant $k$ and $L$ denotes the last triggering time instant. Moreover, $\theta _{i}^{a}(k)\sim {\rm {\mathcal U}}(a(k),b(k))\, $ is a scalar uniformly distributed random variable in the interval $(a(k),b(k))$ with 
	\begin{equation} \label{ZEqnNum165624} 
		\left\{\begin{array}{l} {a(k)=\varphi -\left\| C_{i} \tilde{x}_{i} (k-1)\right\| +\left\| y_{i} (k)\right\|, } \\ {b(k)=\varphi +\left\| C_{i} \tilde{x}_{i} (k-1)\right\| -\left\| y_{i} (k)\right\|, } \end{array}\right.  
	\end{equation} 
	where $\tilde{x}_{i} (k)$ and $\varphi <\alpha $ denote, respectively, the predictive state estimate and an arbitrary scalar value less than the triggering threshold $\alpha .$  Then, 
	
	\begin{enumerate}
		\item  The triggering condition (\ref{eq3x})  will not be violated for the sensor node $i$ and it shows non-triggering misbehavior;
		\item  The original graph ${\rm {\mathcal G}}$ is clustered into several subgraphs, if all sensors in a vertex cut are under attack \eqref{ZEqnNum705143}.
	\end{enumerate}
\end{theorem}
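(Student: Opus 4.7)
The plan is to treat the two claims of the theorem separately and in order.

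For Claim 1, I would substitute the compromised measurement $y_i^a(k) = y_i(k) + \theta_i^a(k)\,1_p$ into the left-hand side of the triggering check (3), so that the quantity to be bounded becomes $\|y_i(k) + \theta_i^a(k)\,1_p - C_i \tilde{x}_i(k-1)\|$. Applying the triangle inequality to split this expression into contributions from $\theta_i^a(k)$, $y_i(k)$, and $C_i\tilde{x}_i(k-1)$, I would then invoke the bound $\theta_i^a(k) \in (a(k), b(k))$ given by (22). The key algebraic observation is that $a(k)$ and $b(k)$ are designed precisely so that $|\theta_i^a(k) - \varphi| < \|C_i \tilde{x}_i(k-1)\| - \|y_i(k)\|$; feeding this back into the triangle inequality collapses the upper bound to $\|y_i^a(k) - C_i \tilde{x}_i(k-1)\| < \varphi$, which is strictly less than $\alpha$ by hypothesis. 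Hence the triggering condition (3) evaluated on the corrupted measurement is never violated for $k \geq L+1$, so $\zeta_i(k) = 0$ is forced, and sensor $i$ exhibits exactly the non-triggering misbehavior of Definition 3.

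For Claim 2, I would combine the conclusion of Claim 1 with the definition of a vertex cut. Claim 1 guarantees that every sensor $i$ in the vertex cut $\mathcal{C}$ has $\zeta_i(k) = 0$ for all $k \geq L+1$, so by the predictive update (4) one gets $\tilde{x}_i(k) = A \tilde{x}_i(k-1)$; no fresh prior $\bar{x}_i(k)$ originating at a cut node ever appears in the quantity broadcast to the rest of the network. Consequently the consensus term in (14a) at any neighbor of $\mathcal{C}$ carries no new information that crosses the cut. Since, by Definition 4, every path in $\mathcal{G}$ between two distinct connected components of the induced subgraph $\mathcal{G}^{\mathcal{C}}$ must pass through $\mathcal{C}$, these components become mutually disconnected from the viewpoint of information flow, so $\mathcal{G}$ effectively clusters into the several subgraphs specified by those components.

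The main obstacle I expect is the bookkeeping of norms in Claim 1: arranging the constants in $a(k)$ and $b(k)$ so that the upper bound from the triangle inequality collapses to $\varphi$ rather than a larger constant. In particular, the role of $\|1_p\|$ must be accounted for consistently, and it may be convenient to split into the two cases $\theta_i^a(k) \geq \varphi$ and $\theta_i^a(k) < \varphi$ in order to use the reverse triangle inequality on the correct side. Once Claim 1 is in hand, the graph-theoretic content of Claim 2 is essentially immediate from the definition of a vertex cut together with the predictive update (4).
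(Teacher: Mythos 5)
Your part 2 and the overall template of part 1 (triangle inequality plus the designed interval $(a(k),b(k))$) match the paper's argument, but the key step of part 1 has a genuine gap exactly where you flag "the main obstacle." The interval $(a(k),b(k))$ gives you $|\theta_i^a(k)-\varphi|<\|C_i\tilde{x}_i(k-1)\|-\|y_i(k)\|$, and combining this with $\bigl|\,\|y_i^a(k)\|-\|y_i(k)\|\,\bigr|\le\|\theta_i^a(k)1_p\|$ pins the \emph{norm} of the corrupted measurement into the band $\|C_i\tilde{x}_i(k-1)\|-\varphi\le\|y_i^a(k)\|\le\|C_i\tilde{x}_i(k-1)\|+\varphi$. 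That is all the triangle inequality can deliver here: a bound on the difference of norms. It does not "collapse" to a bound on the norm of the difference, because the reverse triangle inequality runs the other way, $\bigl|\,\|u\|-\|v\|\,\bigr|\le\|u-v\|$. Since the attack direction $1_p$ need not be aligned with $C_i\tilde{x}_i(k-1)-y_i(k)$, no case split on $\theta_i^a(k)\gtrless\varphi$ and no bookkeeping of $\|1_p\|$ will turn the band on $\|y_i^a(k)\|$ into $\|y_i^a(k)-C_i\tilde{x}_i(k-1)\|\le\varphi$ for $p>1$; that implication requires an additional collinearity argument (or effectively $p=1$). For what it is worth, the paper's own proof stops at precisely this band, rewrites it as the product inequality $(\|y_i^a\|-\|C_i\tilde{x}_i\|-\varphi)(\|y_i^a\|-\|C_i\tilde{x}_i\|+\varphi)\le 0$, and then asserts the bound on the norm of the difference — so the step you would need to justify is the same one the paper leaves unjustified; you cannot close it with the tools you list.

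Part 2 is fine and is essentially the paper's argument: once the cut nodes never trigger, they broadcast nothing new, so every path between the components of the induced subgraph is informationally severed. Your added observation that $\zeta_i(k)=0$ forces $\tilde{x}_i(k)=A\tilde{x}_i(k-1)$ is a correct and slightly more explicit reading of update (4) than the paper gives, but it adds nothing that needs defending.
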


\begin{proof}
	Taking norms from both sides of \eqref{ZEqnNum705143}, the corrupted sensor measurement $y_{i}^{a} (k)$ becomes
	\begin{equation} \label{ZEqnNum862369} 
		\left\| y_{i}^{a} (k)\right\| =\left\| y_{i} (k)+\theta _{i}^{a} (k)1_{p} \right\| . 
	\end{equation} 
	Using the triangular inequality for \eqref{ZEqnNum862369} yields 
	\begin{equation} \label{ZEqnNum171011} 
		\left\| y_{i} (k)\right\| -\left\| \theta _{i}^{a} (k)1_{p} \right\| \le \left\| y_{i}^{a} (k)\right\| \le \left\| y_{i} (k)\right\| +\left\| \theta _{i}^{a} (k)1_{p} \right\| . 
	\end{equation} 
	Based on the bounds of $\theta _{i}^{a} (k)$, given by  \eqref{ZEqnNum165624}, \eqref{ZEqnNum171011}  becomes
	\begin{equation} \label{27)} \nonumber
		\left\| C_{i} \tilde{x}_{i} (k-1)\right\| -\varphi \le \left\| y_{i}^{a} (k)\right\| \le \left\| C_{i} \tilde{x}_{i} (k-1)\right\| +\varphi , 
	\end{equation} 
	which yields
	\begin{equation} \label{ZEqnNum939032} \nonumber
		(\left\| y_{i}^{a} (k)\right\| -\left\| C_{i} \tilde{x}_{i} (k-1)\right\| -\varphi )(\left\| y_{i}^{a} (k)\right\| -\left\| C_{i} \tilde{x}_{i} (k-1)\right\| +\varphi )\le 0. 
	\end{equation} 
	This implies that the condition
	\begin{equation} \label{29)} \nonumber
		\, \left\| y_{i}^{a} (k)-C_{i} \tilde{x}_{i} (k-1)\right\| \le \varphi <\alpha , 
	\end{equation} 
	always holds true. Therefore, under \eqref{ZEqnNum705143}-\eqref{ZEqnNum165624}, the corrupted sensor node $i$ shows non-triggering misbehavior, which proves part 1.
	
	We now prove part 2. Let ${\rm {\mathcal A}}_{n} \subseteq {\rm {\mathcal V}}^{c}$ be the set of sensor nodes showing non-triggering misbehavior. Then, based on the presented result in part 1, under the attack signal  \eqref{ZEqnNum705143}, sensor nodes in the set ${\rm {\mathcal A}}_{n}$ are misled by the attacker and consequently do not transmit any information to their neighbors which make them to act as sink nodes. Since the set of sensor nodes ${\rm {\mathcal A}}_{n} $ is assumed to be a vertex cut. Then, the non-triggering misbehavior of sensor nodes in ${\rm {\mathcal A}}_{n}$ prevents information flow from one portion of the graph ${\rm {\mathcal G}}$ to another portion of the graph ${\rm {\mathcal G}}$ and thus clusters the original graph ${\rm {\mathcal G}}$ into subgraphs. This completes the proof.
\end{proof}

\vspace{-0.3cm}

\noindent
\textbf{Remark 3.} Note that to design the presented strategic false data injection attack signal given in \eqref{ZEqnNum705143} an attacker needs to eavesdrop the actual sensor measurement $y_{i} (k)$ and the last transmitted prior state estimate $\bar{x}_{i} (L)$ through the communication channel. The attacker then determines the   predictive state estimate $\tilde{x}_{i} (k)$ using the dynamics in \eqref{ZEqnNum257073} at each time instant $k\ge L+1$ to achieve non-triggering misbehavior for the sensor node $i$.

We provide Example $1$ for further illustration of the results of Theorem 2. 
\vspace{-0.0cm}
\begin{figure}[!ht]
	\begin{center}
		\includegraphics[width=2.38in,height=2.4in]{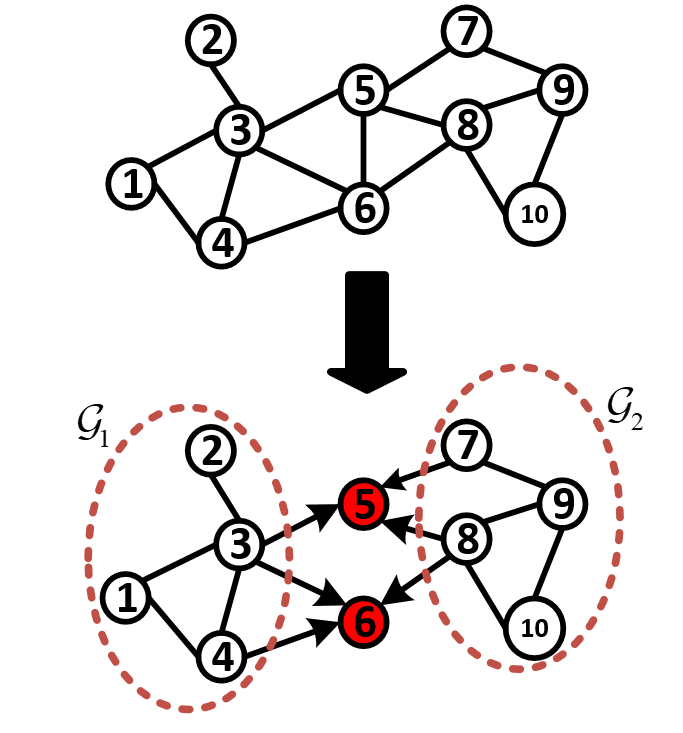}
		\vspace{-2pt}\caption{Effect of non-triggering misbehavior on sensor nodes $\{$5,6$\}$ cluster the graph ${\rm {\mathcal G}}$ in the two isolated graphs ${\rm {\mathcal G}}_{1} $ and ${\rm {\mathcal G}}_{2}.$}\label{fig1}
		\captionsetup{justification=centering}
	\end{center}
\end{figure}
\vspace{-0.15cm}

\noindent
\textbf{Example 1.} Consider a graph topology for a distributed sensor network given in fig. 1. Let the vertex cut ${\rm {\mathcal A}}_{n} =\{ 5,6\}$ be under the presented false data injection attack in Theorem $2$ and show non-triggering misbehavior. Then, the sensor nodes in ${\rm {\mathcal A}}_{n} =\{ 5,6\}$ do not transmit any information to their neighbors under the designed false data injection attack. Moreover, the sensor nodes in ${\rm {\mathcal A}}_{n} =\{ 5,6\}$ act as sink nodes and prevent information flow from subgraph ${\rm {\mathcal G}}_{1}$ to subgraph ${\rm {\mathcal G}}_{2}$ which clusters the graph ${\rm {\mathcal G}}$ into two non-interacting subgraphs ${\rm {\mathcal G}}_{1}$ and ${\rm {\mathcal G}}_{2}$ as shown in Fig. 1. This example shows that the attacker can compromise the vertex cut ${\rm {\mathcal A}}_{n}$ of the original graph ${\rm {\mathcal G}}$ such that it shows non-triggering misbehavior and harm the network connectivity or cluster the graph into various non-interacting subgraphs. 

We now analyze the effect of non-triggering misbehavior on the collective observability of the sensor network. To do so the following definitions are needed.

\smallskip

\noindent \textbf{Definition 5 (Potential Set). } A set of nodes ${\rm {\mathcal P} \subset} {\rm {\mathcal V}}$ is said to be a potential set of the graph ${\rm {\mathcal G}}$ if the pair $(A,C_{{\rm {\mathcal V}}\backslash {\rm{\mathcal P}}} )$ is not collectively observable.

\smallskip

\noindent \textbf{Definition 6 (Minimal Potential Set).} A set of nodes ${\rm {\mathcal P} }_{m} \subset {\rm {\mathcal V}}$ is said to be a minimal potential set if ${\rm {\mathcal P} }_{m}$ is a potential set and no subset of ${\rm {\mathcal P}}_{m}$ is a potential set. 

\smallskip

\noindent \textbf{Remark 4.} Note that if the attacker knows the graph structure and the local pair$(A,C_{i} ),\, \, \, \forall i\in {\mathcal V}$. Then, the attacker can identify the minimum potential set of sensor nodes ${\rm{\mathcal P}}_{m}$ in the graph ${\rm {\mathcal G}}$ and achieves non-triggering misbehavior for ${\rm {\mathcal P} }_{m}.$ Thus, the set of sensor nodes ${\rm {\mathcal P}}_{m}$ does not exchange any information with its neighbors and becomes isolated in the graph ${\rm {\mathcal G}}$.

\smallskip

\noindent \textbf{Corollary 1.}
\textit{Let the set of sensors that shows non-triggering misbehavior be the minimal potential set ${\rm {\mathcal S}}_{n}$. Then, the network is no longer collectively observable and the process state reconstruction from the distributed sensor measurements is impossible.}
\vspace{-0.1cm}
\begin{proof}
	According to the statement of the corollary, ${\rm {\mathcal S}}_{n}$ represents a minimal potential set of the graph ${\rm {\mathcal G}}$ and shows non-triggering misbehavior. Then, the sensor nodes in the set ${\rm {\mathcal S}}_{n}$ do not transmit any information to their neighbors and they act as sink nodes, i.e., they only absorb information. Therefore, the exchange of information happen just between the remaining sensor nodes in the graph ${\rm {\mathcal G}}\backslash {\rm {\mathcal S}}_{n}$.  Hence, after excluding the minimum potential nodes ${\rm {\mathcal S}}_{n}$, the pair $(A,C_{{\rm {\mathcal G}}\backslash {\rm {\mathcal S}}_{n} } )$ becomes unobservable based on the Definitions $5$ and $6$, and thus makes the state reconstruction impossible. This completes the proof. 
\end{proof}

\vspace{-0.4cm}
\subsection{Continuous-triggering Misbehavior}
In this subsection, we discuss how an attacker can compromise the actual sensor measurement to mislead the event-triggered mechanism and achieves continuous-triggering misbehavior and thus results in a time-driven DKF that not only drains the communication resources but also continuously propagates the adverse effect of attack in the network.

\smallskip

\noindent \textbf{Definition 7} \textbf{(Continuous-triggering Misbehavior).} Let the attacker design an attack strategy such that it deceives the triggering mechanism in (\ref{eq3x}) at each time instant. This turns the event-driven DKF into a time-driven DKF that continuously exchanges corrupted information among sensor nodes. We call this a continuous-triggering misbehavior. 

We now show how a reply attack, followed by an eavesdropping attack, can manipulate the sensor reading to cause continuous violation of the event-triggered mechanism (\ref{eq3x}).

\vspace{-0.1cm}

\begin{theorem}
	Consider the process dynamics \eqref{ZEqnNum820040} with $N$ sensor nodes \eqref{ZEqnNum687942} communicating over the graph ${\rm {\mathcal G}}.$ Let the sensor node $i$ in \eqref{ZEqnNum687942} be under a replay attack given by\newline
	\vspace{-0.35cm}
	\begin{equation} \label{ZEqnNum253008} 
		y_{i}^{a} (k)=C_{i} \bar{x}_{i} (k-1)+\upsilon _{i} (k),\, \, \forall k\ge l+1, 
	\end{equation} 
	
	\vspace{-0.15cm}
	\noindent
	where $\bar{x}_{i}(k-1)$ represents the last transmitted prior state and $\upsilon_{i} (k)$ denotes a scalar disruption signal. $l$ denotes the last triggering time instant when intact prior state estimate was transmitted. Then, the sensor node $i$ shows continuous-triggering misbehavior if the attacker selects $\left\| \upsilon _{i} (k)\right\| >\alpha.$ 
\end{theorem}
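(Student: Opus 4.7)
The plan is to prove continuous triggering by induction on $k$ starting from $k=l+1$, showing that under the replay attack (29) the triggering condition (3) is violated at every step. The core identity I intend to exploit is that whenever triggering occurred at the previous step, equation (14b) with $\zeta_i(k-1)=1$ forces $\tilde{x}_i(k-1)=\bar{x}_i(k-1)$. Since the attacker replays precisely $C_i\bar{x}_i(k-1)$ inside $y_i^a(k)$, the term $C_i\tilde{x}_i(k-1)$ in the triggering test cancels with $C_i\bar{x}_i(k-1)$ in the spoofed measurement, leaving only the disruption term $\upsilon_i(k)$.

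First I would establish the base case at $k=l+1$. Because $l$ is by hypothesis the last (intact) triggering instant, we have $\zeta_i(l)=1$ and therefore $\tilde{x}_i(l)=\bar{x}_i(l)$. Substituting (29) into the left-hand side of (3) gives
\[
\|y_i^a(l+1) - C_i\tilde{x}_i(l)\| \;=\; \|C_i\bar{x}_i(l)+\upsilon_i(l+1) - C_i\bar{x}_i(l)\| \;=\; \|\upsilon_i(l+1)\| \;>\; \alpha,
\]
so the triggering condition fails and sensor $i$ transmits, yielding $\zeta_i(l+1)=1$. For the inductive step, assume transmission has occurred at every instant from $l$ through $k-1$. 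Then $\zeta_i(k-1)=1$ again gives $\tilde{x}_i(k-1)=\bar{x}_i(k-1)$, and the same cancellation argument produces $\|y_i^a(k)-C_i\tilde{x}_i(k-1)\|=\|\upsilon_i(k)\|>\alpha$. Hence triggering occurs at every $k\ge l+1$, which is the definition of continuous-triggering misbehavior in Definition 7.

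The main subtlety (rather than obstacle) is bookkeeping around which $\bar{x}_i$ the attacker replays. Under the attack, the prior evolves from a corrupted posterior, so $\bar{x}_i(k-1)$ inside (29) is in fact the corrupted prior at time $k-1$; I need to observe that the attacker's eavesdropping access to the transmitted packet provides exactly this quantity, and that under continuous triggering this is the same quantity appearing in $\tilde{x}_i(k-1)$. Once that identification is made explicit, the cancellation is clean and the bound $\|\upsilon_i(k)\|>\alpha$ closes the argument without any further estimates on the noise or dynamics.
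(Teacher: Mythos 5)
Your proof is correct and takes essentially the same route as the paper: the paper also substitutes the replayed measurement into the triggering test, uses $\zeta_i(k-1)=1$ to kill the $(1-\zeta_i(k-1))C_i[\bar{x}_i(k-1)-A\bar{x}_i(k-2)]$ residual, and concludes $\|y_i^a(k)-C_i\tilde{x}_i(k-1)\|=\|\upsilon_i(k)\|>\alpha$ for all $k\ge l+1$. Your explicit induction (and the remark that the replayed prior is the corrupted one the attacker eavesdrops) merely makes precise what the paper compresses into ``based on similar argument.''
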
{}

\begin{proof}
	To mislead a sensor to cause a continuous-triggering misbehavior, the attacker needs to design the attack signal such that the event-triggered condition (\ref{eq3x})    is constantly being violated, i.e., $\, \left\| y_{i}^{a} (k)-C_{i} \tilde{x}_{i} (k-1)\right\| \ge \alpha $ all the time. The attacker can eavesdrop the last transmitted prior state estimate $\bar{x}_{i}(k-1)$ and design the strategic attack signal given by \eqref{ZEqnNum253008}. Then, one has
	\vspace{-0.15cm}
	\begin{equation} \label{ZEqnNum491202} 
		\begin{array}{l} {y_{i}^{a} (k)-C_{i} \tilde{x}_{i} (k-1)=C_{i} \bar{x}_{i} (k-1)+\delta _{i} (k)-C_{i} \tilde{x}_{i} (k-1)} \\ 
			\, \, \, \, \, \, \, \, \, \, \, \, \, \, \, \, \, \, \, \, \, \, \, \,  \,\, \, \, \, \, \, \, \, \, \, \, \,{=C_{i} \bar{x}_{i} (k-1)+\upsilon _{i} (k)}   -C_{i} [\zeta _{i} (k-1)\bar{x}_{i} (k-1)\\
			\, \, \, \, \, \, \, \, \, \, \, \, \, \, \, \, \, \, \, \, \, \, \, \,  \,\, \, \, \, \, \, \, \, \, \, \, \,+(1-\zeta _{i} (k-1))A\bar{x}_{i} (k-2)] \\
			 \, \, \, \, \, \, \, \, \, \, \, \, \, \, \, \, \, \, \, \, \, \, \, \,  \,\, \, \, \, \, \, \, \, \, \, \, \,{=(1-\zeta _{i} (k-1))C_{i} [\bar{x}_{i} (k-1)-A\bar{x}_{i} (k-2)]+\upsilon _{i} (k),} \end{array} 
	\end{equation} 
	Taking the norm from both sides of \eqref{ZEqnNum491202} yields
	\begin{equation} \label{ZEqnNum734745} 
		\begin{array}{l} {\left\| y_{i}^{a} (k)-C_{i} \tilde{x}_{i} (k-1)\right\| } \\ {=\left\| (1-\zeta _{i} (k-1))C_{i} [\bar{x}_{i} (k-1)-A\bar{x}_{i} (k-2)]+\upsilon _{i} (k)\right\| ,} \end{array} 
	\end{equation} 
	Since for $k=l+1$, $\zeta _{i}(l)=1$
	\vspace{-0.15cm}
	\begin{equation} \label{34)} 
		\left\| y_{i}^{a} (l+1)-C_{i} \tilde{x}_{i} (l)\right\| =\left\| \upsilon _{i} (l+1)\right\| , 
	\end{equation} 
	{If the attacker selects $\upsilon _{i}(l+1)$ in \eqref{34)}} such that $\left\| \upsilon _{i} (l+1)\right\| >\alpha $, then  the attack signal \eqref{ZEqnNum253008} ensures triggering at time instant $k=l+1.$ Then, based on similar argument for \eqref{ZEqnNum734745}, $\forall k\ge l+1$
	\vspace{-0.15cm}
	\begin{equation} \label{35)} \nonumber
		\left\| y_{i}^{a} (k)-C_{i} \tilde{x}_{i} (k-1)\right\| =\left\| \upsilon _{i} (k)\right\| >\alpha , 
	\end{equation} 
	which ensures continuous triggering misbehavior. This completes the proof. 
\end{proof}                             
\vspace{-0.25cm}

To achieve continuous-triggering misbehavior the attacker needs to eavesdrop prior state estimates $\bar{x}_{i} (k-1)$ at each triggering instant and selects the $\upsilon _{i}(k)$ large enough such that $\left\| \upsilon _{i} (k)\right\| >\alpha $ always holds true. 

Note that continuous-triggering misbehavior can completely ruin the advantage of event-triggered mechanisms and turn it into time-driven mechanisms. This significantly increases the communication burden. Since nodes in the WSNs are usually powered through batteries with limited energy, the attacker can drain sensors limited resources by designing the above-discussed attack signals to achieve continuous-triggering misbehavior, and, consequently can make them non-operating in the network along with the deteriorated performance of the network.

Note that although we classified attacks into non-triggering misbehavior and continuous-triggering misbehavior, to analyze how the attacker can leverage the event-triggered mechanism, the following \textit{analysis, detection and mitigation approaches} are not restricted to any class of attacks. 
\vspace{-0.3cm}
\section{ Attack Detection}
In this section, we present an entropy estimation-based attack detection approach for the event-triggered DKF. 

The KL divergence is a non-negative measure of the relative entropy between two probability distributions which is defined as follows.

\noindent \textbf{Definition 8 (KL Divergence) \cite{c24}}. Let $X$ and $Z$ be two random variables with probability density function $P_{X}$ and $P_{Z}$, respectively. The KL divergence measure between $P_{X}$ and $P_{Z}$ is defined as
\vspace{-0.2cm}
\begin{equation} \label{ZEqnNum937457} 
	D_{KL} (P_{X} ||P_{Z} )=\int _{\theta \in \Theta }P_{X} (\theta )\log \left(\frac{P_{X} (\theta )}{P_{Z} (\theta )} \right) , 
\end{equation} 
with the following properties \cite{c32}
\begin{enumerate}
	\item  $D_{KL} (P_{X} ||P_{Z} )\ge 0;$
	\item  $D_{KL} (P_{X} ||P_{Z} )=0$  if and only if,  $P_{X} =P_{z} ;$ 
	\item  $D_{KL} (P_{X} ||P_{Z} )\ne D_{KL} (P_{Z} ||P_{X} ).$
\end{enumerate}

In the existing resilient literature, the entropy-based anomaly detectors need to know the probability density function of sequences, i.e., $P_{X}$ and $P_{Z},$ {in \eqref{ZEqnNum937457}} to determine the relative entropy. In most of the cases, authors assume that the probability density function of corrupted innovation sequence remains Gaussian (see \cite{c24} and \cite{c34} for instance). Since, the attacker's input signal is unknown, it is restrictive to assume that the probability density function of the corrupted sequence remains Gaussian. To relax this \textit{restrictive assumption} on probability density function of the corrupted sequence, we estimate the relative entropy between two random sequences $X$ and $Z$ using \textit{$k-$nearest neighbor $(k-NN)$} based divergence estimator \cite{d5}.

{Let $\{ X_{1},\ldots,X_{n_{1} } \} $ and $\{ Z_{1} ,\ldots ,Z_{n_{2} } \} $ be i.i.d. samples drawn independently from $P_{X} $ and $P_{Z},$ respectively with $X_{j},\,\, Z_{j} \in {\bf {\mathbb R}}^{m}$. Let $d_{k}^{X}(i)$ be the Euclidean distance between $X_{i}$ and its \textit{$k-NN$} in $\{ X_{l} \} _{l\ne i} .$ The \textit{$k-NN$} of a sample $s$ in $\{ s_{1} ,\ldots ,s_{n} \} $ is $s_{i(k)}$ where $i(1),\ldots,i(n)$ such that
	\vspace{-0.2cm}
	\begin{equation} \label{37)} \nonumber
		\left\| s-s_{i(1)} \right\| \le \left\| s-s_{i(2)} \right\| \le \ldots \le \left\| s-s_{i(n)} \right\| . 
	\end{equation} 
	More specifically, the Euclidean distance $d_{k}^{X}(i)$ is given by \cite{d5a}
	\begin{equation}
		\nonumber
		\begin{array}{l}
			d_k^X(i) = \mathop {\min }\limits_{j = 1, \ldots ,{n_1},j \ne \{ i,{j_1},...,{j_{k - 1}}\} } \left\| {{X_i} - {X_j}} \right\|
		\end{array}
\end{equation}}
The \textit{$k-NN$} based relative entropy estimator is given by \cite{d5}
\begin{equation} \label{ZEqnNum207466} 
	\hat{D}_{KL} (P_{X} ||P_{Z} )=\frac{m}{n_{1} } \sum _{i=1}^{n_{1} }\log \frac{d_{k}^{Z} (i)}{d_{k}^{X} (i)}  +\log \frac{n_{2} }{n_{1} -1} . 
\end{equation} 

The innovation sequences represent the deviation of the actual output of the system from the estimated one. It is known that innovation sequences approach a steady state quickly and thus it is reasonable to design innovation-based anomaly detectors to capture the system abnormality \cite{c24}. Using the innovation sequence of each sensor and the innovation sequences that it estimates for its neighbors, we present innovation based divergence estimator and design detectors to capture the effect of the attacks on the event-triggered DKF. 

Based on innovation expression \eqref{ZEqnNum276515}, in the presence of attack, one can write the compromised innovation $r_{i}^{a} (k)$ for sensor node $i$ with disrupted measurement $y_{i}^{a} (k)$ in \eqref{ZEqnNum973066} and state estimation $\bar{x}_{i}^{a} \, $ based on \eqref{ZEqnNum120276} as
\vspace{-0.15cm}
\begin{equation} \label{ZEqnNum255093} 
	r_{i}^{a} (k)=y_{i}^{a} (k)-C_{i} \bar{x}_{i}^{a} (k). 
\end{equation} 

Let $\{ r_{i}^{a} (l),\ldots ,r_{i}^{a} (l-1+w)\} $ and $\{ r_{i} (l),\ldots ,r_{i} (l-1+w)\} $ be i.i.d. \textit{p}-dimensional samples of corrupted and nominal innovation sequences with probability density function $P_{r_{i}^{a} } $  and $P_{r_{i} },$ respectively. The nominal innovation sequence follows $r_{i}(k)$ defined in \eqref{ZEqnNum276515}.  Using \textit{$k-NN$} based relative entropy estimator \eqref{ZEqnNum207466}, one has \cite{d5}
\begin{equation} \label{ZEqnNum280433} 
	\hat{D}_{KL} (P_{r_{i}^{a} } ||P_{r_{i} } )=\frac{p}{w} \sum _{j=1}^{w}\log \frac{d_{k}^{r_{i} } (j)}{d_{k}^{r_{i}^{a} } (j)}  +\log \frac{w}{w-1} ,\, \, \forall i\in {\rm {\mathcal V}}. 
\end{equation} 

Define the average of the estimated KL divergence over a time window of $T$ as
\begin{equation} \label{ZEqnNum738078} 
	\Phi _{i} (k)=\frac{1}{T} \sum _{l=k-T+1}^{k}\hat{D}_{KL} (P_{r_{i}^{a} } ||P_{r_{i} } ) ,\, \, \forall i\in {\rm {\mathcal V}}. 
\end{equation} 

Now, in the following theorem, it is shown that the effect of attacks on the sensors can be captured using \eqref{ZEqnNum738078}.

\begin{theorem}
	Consider the distributed sensor network \eqref{ZEqnNum820040}-\eqref{ZEqnNum687942} under attack on sensor. Then, 
	\begin{enumerate}
		\item  in the absence of attack, $\Phi _{i} (k)=\log (w/w-1),\, \, \, \forall k;$ 
		\item  in the presence of attack, $\Phi _{i} (k)>\delta ,\, \, \forall k>l_{a},$ where $\delta $ and $l_{a}$ denotes, respectively, a predefined threshold and the time instant at which the attack happen. 
	\end{enumerate}
\end{theorem}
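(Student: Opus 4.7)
The plan is to handle the two parts of Theorem 4 separately, proving Part 1 by a direct simplification of the $k$-NN divergence estimator in \eqref{ZEqnNum280433} and Part 2 by combining the strict positivity of the KL divergence (Property 2 in Definition 8) with the consistency of the $k$-NN estimator. The underlying idea is that the attack-free baseline of $\Phi_i(k)$ is a deterministic constant depending only on the window length $w$, whereas any attack that perturbs the innovation distribution pushes the estimator strictly above this baseline.

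For Part 1, I would first observe that in the absence of attack, equations \eqref{ZEqnNum973066} and \eqref{ZEqnNum397788} reduce to $y_i^{a}(k)=y_i(k)$ and $\bar{x}_i^{a}(k)=\bar{x}_i(k)$, so the compromised innovation in \eqref{ZEqnNum255093} coincides with the nominal innovation \eqref{ZEqnNum276515}, i.e., $r_i^{a}(k)=r_i(k)$ for all $k$. Consequently, the two i.i.d. samples entering \eqref{ZEqnNum280433} are identical sequences, and for every index $j$ the $k$-th nearest-neighbor distances satisfy $d_k^{r_i^{a}}(j)=d_k^{r_i}(j)$. The summation in \eqref{ZEqnNum280433} then contains only $\log(1)=0$ terms, leaving $\hat{D}_{KL}(P_{r_i^{a}}\|P_{r_i})=\log(w/(w-1))$. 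Averaging a constant over the $T$-length window in \eqref{ZEqnNum738078} preserves its value, which yields $\Phi_i(k)=\log(w/(w-1))$ for all $k$.

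For Part 2, once an attack is injected at time $l_a$, substituting \eqref{ZEqnNum973066} and the corrupted prior from \eqref{ZEqnNum120276} into \eqref{ZEqnNum255093} gives $r_i^{a}(k)=r_i(k)+f_i(k)-C_i(\bar{x}_i^{a}(k)-\bar{x}_i(k))$, which shifts the innovation in mean (and, through \eqref{ZEqnNum998129}, in covariance) and hence $P_{r_i^{a}}\neq P_{r_i}$. By Property 2 of Definition 8, the true divergence $D_{KL}(P_{r_i^{a}}\|P_{r_i})$ is then strictly positive. Since the $k$-NN estimator in \eqref{ZEqnNum207466} is a consistent estimator of the true KL divergence, for a sufficiently large window $w$ and $k>l_a$ the first term in \eqref{ZEqnNum280433} contributes a strictly positive quantity that persists under the sliding-window average in \eqref{ZEqnNum738078}, so $\Phi_i(k)$ strictly exceeds the attack-free baseline $\log(w/(w-1))$. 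Choosing the detection threshold $\delta$ slightly above $\log(w/(w-1))$ then yields $\Phi_i(k)>\delta$ for all $k>l_a$, completing the detection argument.

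The main obstacle will be Part 2, specifically turning the qualitative statement \emph{the $k$-NN KL estimator is positive whenever the two distributions differ} into a concrete bound against a predefined $\delta$ with finite window lengths $w$ and $T$. A fully rigorous treatment would require a concentration property of the $k$-NN divergence estimator to ensure that $\hat{D}_{KL}$ exceeds the attack-free baseline with high probability under the corrupted distribution while staying below $\delta$ under the nominal one. I would sidestep this by treating $\delta$ as a design parameter selected above $\log(w/(w-1))$ by a safety margin and invoking the consistency result from reference [d5] for sufficiency, leaving the data-driven tuning of $\delta$ as a practical calibration step.
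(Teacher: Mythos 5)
Your proposal is correct and follows essentially the same route as the paper. Part 1 is identical: in the attack-free case the two innovation samples coincide, the nearest-neighbor distances cancel term by term, and the estimator collapses to the constant $\log(w/(w-1))$, which the windowed average preserves. For Part 2 the paper argues directly on the empirical quantities --- it asserts that under attack the cross-sample distances exceed the within-sample distances, writes the estimator as $\frac{p}{w}\sum_j \log\bigl(1+\Delta_k^{r_i}(j)/d_k^{r_i^a}(j)\bigr)+\log(w/(w-1))\gg\log(w/(w-1))$, and concludes $\Phi_i(k)>\delta$ --- whereas you route the same conclusion through the strict positivity of the true KL divergence plus the consistency of the $k$-NN estimator from [d5]. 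These are two phrasings of the same idea, and both leave the same finite-sample step unquantified (why the estimate clears a \emph{predefined} $\delta$ for finite $w$ and $T$); your write-up is the more candid of the two in flagging that gap and treating $\delta$ as a calibrated design parameter, which is also how the paper's Remark 5 ultimately handles it.
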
{}

\begin{proof}
	In the absence of attack, the samples of innovation sequences $\{ r_{i}^{a} (l),\ldots ,r_{i}^{a} (l-1+w)\} $ and $\{ r_{i} (l),\ldots ,r_{i} (l-1+w)\} $ are similar. Then, the Euclidean distance $d_{k}^{r_{i}^{a} } (j)=d_{k}^{r_{i} } (j),\, \, \forall j\in \{ 1,...,w\} $ and one has
	\begin{equation} \label{ZEqnNum663932} 
		\hat{D}_{KL} (P_{r_{i}^{a} } ||P_{r_{i} } )=\log \frac{w}{w-1} ,\, \, \forall i\in {\rm {\mathcal V}}. 
	\end{equation} 
	Based on \eqref{ZEqnNum663932}, one has 
	\vspace{-0.15cm}
	\begin{equation} \label{a43)} 
		\Phi _{i} (k)=\frac{1}{T} \sum _{l=k-T+1}^{k}\log \frac{w}{w-1}  =\log \frac{w}{w-1} < \delta ,\, \, \forall i\in {\rm {\mathcal V}}. 
	\end{equation} 
	{where $\log (w/w-1)$ in (42) depends on the sample size of innovation sequence and $\log (w/w-1)\le 0.1,\, \, \, \forall w\ge 10$. Therefore, the predefined threshold $\delta$ can be selected with some $\delta>0.1$  such that the condition in (42) is always satisfied.} This complete the proof of part 1.
	
	In the presence of attack, the samples of innovation sequences $\{ r_{i}^{a} (l),\ldots ,r_{i}^{a} (l-1+w)\} $ and $\{ r_{i} (l),\ldots ,r_{i} (l-1+w)\} $ are different, i.e., $d_{k}^{r_{i}^{a} } (j)\ne d_{k}^{r_{i} } (j),\, \, \forall j\in \{ 1,...,w\} $. More specifically, $d_{k}^{r_{i} } (j)>d_{k}^{r_{i}^{a} } (j), \, \, \forall j\in \{ 1,...,w\} $ due to change in the corrupted innovation sequence. Therefore, based on \eqref{ZEqnNum280433} the estimated relative entropy between sequences becomes
	\begin{equation} \label{ZEqnNum657988} 
		\hat{D}_{KL} (P_{r_{i}^{a} } ||P_{r_{i} } )=\frac{p}{w} \sum _{j=1}^{w}\log (1+\frac{\Delta _{k}^{r_{i} } (j)}{d_{k}^{r_{i}^{a} } (j)} ) +\log \frac{w}{w-1} ,\, \forall i\in {\rm {\mathcal V}}, 
	\end{equation} 
	with $\Delta _{k}^{r_{i} } (j)$ as the change in Euclidean distance due to corrupted innovation sequence. Based on \eqref{ZEqnNum657988}, one has 
	\begin{equation} \label{ZEqnNum750552} 
		\hat{D}_{KL} (P_{r_{i}^{a} } ||P_{r_{i} } )=\frac{p}{w} \sum _{j=1}^{w}\log (1+\frac{\Delta _{k}^{r_{i} } (j)}{d_{k}^{r_{i}^{a} } (j)} ) +\log \frac{w}{w-1} \gg \log \frac{w}{w-1} . 
	\end{equation} 
	Thus, one has
	\vspace{-0.2cm}
	\begin{equation} \label{46)} 
		\Phi _{i} (k)=\frac{1}{T} \sum _{l=k-T+1}^{k}\hat{D}_{KL} (P_{r_{i}^{a} } ||P_{r_{i} } )>\delta  ,\, \, \forall i\in {\rm {\mathcal V}}, 
	\end{equation} 
	where $T$ and $\delta $ denote the sliding window size and the predefined design threshold. This completes the proof.
\end{proof}
Based on Theorem 4, one can use the following condition for attack detection.
\vspace{-0.2cm}
\begin{equation} \label{ZEqnNum861796} 
	\left\{\begin{array}{l} {\Phi _{i} (k)\, <\delta :H_{0}, } \\ {\Phi _{i} (k)\, >\delta \, \, :H_{1}, } \end{array}\right. 
\end{equation} 
where $\delta $ denotes the designed threshold for detection, the null hypothesis $H_{0} $ represents the intact mode of sensor nodes and $H_{1}$ denotes the compromised mode of sensor nodes.

\smallskip
\noindent \textbf{Remark 5.} {Note that in the absence of an attack, the innovation sequence has a known zero-mean Gaussian distribution due to the measurement noise. Based on the prior system knowledge, one can always consider that the nominal innovation sequence is zero-mean Gaussian distribution with predefined covariance. The bound on the predefined covariance can be determined during normal operation of the event-triggered DKF.  This assumption for the knowledge of the nominal innovation sequence for attack detection is standard in the existing literature (see \cite{c34} for instance). The designed threshold $\delta $ in \eqref{ZEqnNum861796} is a predefined parameter and chosen appropriately for the detection of the attack signal. Moreover, the selection of detection threshold based on expert knowledge is standard in the existing literature. For example, several results on adversary detection and stealthiness have considered similar thresholds \cite{c24}-\cite{c26}. } 

\begin{algorithm}[!ht]
	\caption{Detecting attacks on sensors.}
%	\begin{algorithmic}
		\begin{enumerate}
			\item [1:] Initialize with a time window $T$ and detection threshold $\delta.$
			\item [2:] \textbf{procedure} $\forall i=1,\ldots ,N$ 
			\item [3:] Use samples of innovation sequences $\{ r_{i}^{a} (l),\ldots,$ \qquad $r_{i}^{a} (l-1+w)\} $ and $\{ r_{i} (l),\ldots ,r_{i} (l-1+w)\} $ based on \eqref{ZEqnNum255093} and \eqref{ZEqnNum276515}, $\forall l\in k.$
			\item [4:] Estimate the $\hat{D}_{KL} (P_{r_{i}^{a} } ||P_{r_{i} } )$ using \eqref{ZEqnNum750552}.
			\item [5:] Compute $\Phi _{i} (k)$ as \eqref{46)}  and use condition in \eqref{ZEqnNum861796} to detect attacks on sensors.
			\item [6:] \textbf{end procedure}
		\end{enumerate}
%	\end{algorithmic}
\end{algorithm}
Based on the results presented in Theorem 4 and Algorithm 1, one can capture attacks on both sensors and communication links, but it cannot identify the specific compromised communication link as modelled in \eqref{ZEqnNum397788}. To detect the source of attacks, we present an estimated entropy-based detector to capture the effect of attacks on the specific communication channel. More specifically, the relative entropy between the estimated innovation sequences for the neighbors at particular sensor node and the nominal innovation sequence of the considered sensor node is estimated using \eqref{ZEqnNum207466}.

Define the estimated innovation sequences $\zeta _{i,j}^{a}(k)$ for a neighbor $j$ under attacks on communication channel from the sensor node $i$ side as 
\vspace{-0.15cm}
\begin{equation} \label{ZEqnNum178443} 
	\zeta _{i,j}^{a} (k)=y_{i} (k)-C_{j} \tilde{x}_{j}^{a} (k), 
\end{equation} 
where $\tilde{x}_{j}^{a}(k)$ is the corrupted communicated state estimation of neighbor $j$ at sensor node $i$ at the last triggering instant.\par 
Let $\{ \zeta _{i,j}^{a} (l),\ldots ,\zeta _{i,j}^{a} (l-1+w)\}$ be i.i.d. \textit{p}-dimensional samples of neighbor's estimated innovation at the sensor node $i$
with probability density function $P_{\zeta _{i,j}^{a} }.$  Using \textit{$k-NN$} based relative entropy estimator \eqref{ZEqnNum207466}, one has
\begin{equation} \label{ZEqnNum691139} 
	\hat{D}_{KL} (P_{\zeta _{i,j}^{a} } ||P_{r_{i} } )=\frac{p}{w} \sum _{j=1}^{w}\log \frac{d_{k}^{r_{i} } (j)}{d_{k}^{\zeta _{i,j}^{a} } (j)}  +\log \frac{w}{w-1} ,\, \, \forall i\in {\rm {\mathcal V}},j\in N_{i} . 
\end{equation} 

Note that in the presence of attacks on the communication channels, the neighbor's actual innovation differs the neighbor's estimated innovation at sensor $i$. {In the absence of the attack, the mean value of all the sensor state estimates converge to the mean of the desired process state at steady state, and, therefore, the innovation sequences $r_{i}$ and $\zeta _{i,j}^{a}$ have the same zero mean Gaussian distributions. In the presence of attack, however, as shown in Theorem 5 and Algorithm 2, their distributions diverge.}

Define the average of the KL divergence over a time window of $T$ as
\vspace{-0.2cm}
\begin{equation} \label{ZEqnNum932962} 
	\Psi _{i,j} (k)=\frac{1}{T} \sum _{l=k-T+1}^{k}\hat{D}_{KL} (P_{\zeta _{i,j}^{a} } ||P_{r_{i} } ) ,\, \, \forall i\in {\rm {\mathcal V}},\, j\in N_{i} . 
\end{equation} 

\begin{theorem}
	Consider the distributed sensor network \eqref{ZEqnNum820040}-\eqref{ZEqnNum687942} under attack on communication links \eqref{ZEqnNum397788}. Then, in the presence of an attack, $\Psi _{i,j} (k)>\delta ,\, \, \forall k$ where $\delta $ denotes a predefined threshold.
\end{theorem}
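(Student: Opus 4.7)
The plan is to mirror the argument of Theorem 4, but applied to the estimated neighbor innovation sequence $\zeta_{i,j}^{a}(k)$ defined in \eqref{ZEqnNum178443} rather than to the local corrupted innovation $r_i^{a}(k)$. The essential observation is that in the absence of attack, the communicated predictive estimate $\tilde{x}_j(k)$ at sensor $i$ tracks the desired process state in the mean-square sense (by Remark 1), so $\zeta_{i,j}^{a}(k)$ reduces to a zero-mean Gaussian sequence with the same statistics as $r_i(k)$. Once the channel $j\to i$ is corrupted by $\bar{f}_j(k)$ as in \eqref{ZEqnNum397788}, the propagated predictive estimate $\tilde{x}_j^{a}(k)$ acquires a non-vanishing bias through the recursion in \eqref{ZEqnNum120276}, causing the distributions $P_{\zeta_{i,j}^{a}}$ and $P_{r_i}$ to separate.

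First, I would write $\zeta_{i,j}^{a}(k) = y_i(k) - C_j\tilde{x}_j^{a}(k)$ and compare it with the nominal innovation $r_i(k) = y_i(k) - C_i \bar{x}_i(k)$, noting that in the attack-free regime both sequences are zero-mean Gaussian with matching covariance structure (at steady state) because all sensors' state estimates converge in mean to the true state under Assumption 3 and Remark 1. This justifies that the empirical $k$-NN distances satisfy $d_k^{\zeta_{i,j}^{a}}(j) \approx d_k^{r_i}(j)$, so that \eqref{ZEqnNum691139} would reduce to $\log(w/(w-1))$, which is below the threshold $\delta$ (with the same threshold rationale used after \eqref{a43)}).

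Next, for the attack case, I would argue that the injection $\bar{f}_j(k)$ on the communication channel induces a systematic shift in $\tilde{x}_j^{a}(k)$ relative to the intact $\tilde{x}_j(k)$ through the predictive update rule, and consequently in $\zeta_{i,j}^{a}(k)$. Following the same device as in \eqref{ZEqnNum657988}–\eqref{ZEqnNum750552}, I would introduce an increment $\Delta_k^{r_i}(j)$ in the Euclidean distance and write
\begin{equation}\nonumber
\hat{D}_{KL}(P_{\zeta_{i,j}^{a}} \| P_{r_i}) = \frac{p}{w}\sum_{j=1}^{w}\log\!\Bigl(1 + \frac{\Delta_k^{r_i}(j)}{d_k^{\zeta_{i,j}^{a}}(j)}\Bigr) + \log\frac{w}{w-1} \gg \log\frac{w}{w-1}.
\end{equation}
Averaging over the sliding window of length $T$ and using \eqref{ZEqnNum932962} then delivers $\Psi_{i,j}(k) > \delta$ for all $k$ after the attack onset, completing the argument.

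The main obstacle I anticipate is the first step: rigorously justifying that $\zeta_{i,j}^{a}(k)$ and $r_i(k)$ share essentially the same nominal distribution in the attack-free case, since $\zeta_{i,j}^{a}(k)$ uses the \emph{neighbor's} observation matrix $C_j$ and the (possibly stale) transmitted predictive estimate $\tilde{x}_j(k)$ rather than $C_i \bar{x}_i(k)$. This is really a statement about the steady-state convergence of the event-triggered DKF across the network, and I would invoke Remark 1 (and the corresponding exponential mean-square boundedness cited from [17, Th.1]) to guarantee that the mean of $\zeta_{i,j}^{a}(k)$ is driven to zero and its covariance matches that of $r_i(k)$ up to a bounded discrepancy absorbed into the threshold $\delta$. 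Once this nominal regime is established, the divergence argument in the attacked regime is a direct analogue of Theorem 4.
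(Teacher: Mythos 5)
Your proposal is correct and takes essentially the same route as the paper: the paper's proof of this theorem is literally a one-line reference to the argument of part 2 of Theorem 4, and your expansion — establishing that $\zeta_{i,j}^{a}(k)$ and $r_{i}(k)$ share the same nominal zero-mean Gaussian distribution in the attack-free regime (which the paper itself asserts in the paragraph preceding the theorem) and then repeating the distance-increment device of \eqref{ZEqnNum657988}--\eqref{ZEqnNum750552} — is exactly that argument. The obstacle you flag about the nominal distributions is real but is resolved the same way the paper resolves it, by appealing to mean-square convergence of all sensors' estimates to the process state.
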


\begin{proof}
	The result follows a similar argument as given in the proof of part $2$ of Theorem 4.  
\end{proof}
\begin{algorithm}[!ht]
	\caption{Detecting attack on a specific communication link.}
%	\begin{algorithmic}
		\begin{enumerate}
			\item [1:] Initialize with a time window $T$ and detection threshold $\delta.$
			\item [2:] \textbf{procedure} $\forall i=1,\ldots ,N$ 
			\item [3:] For each sensor node $j\in N_{i} $, use samples of innovation sequences$\{ \zeta _{i,j}^{a} (l),\ldots ,\zeta _{i,j}^{a} (l-1+w)\} $ and  $\{ r_{i} (l),\ldots ,r_{i} (l-1+w)\}$ based on \eqref{ZEqnNum178443} and \eqref{ZEqnNum276515}, $\forall l\in k.$
			\item [4:] Estimate the $\hat{D}_{KL} (P_{\zeta _{i,j}^{a} } ||P_{r_{i} } )$ using \eqref{ZEqnNum691139}.
			\item [5:] Compute $\Psi _{i,j}(k)$ as \eqref{ZEqnNum932962} and use same argument in \eqref{ZEqnNum861796} to detect attacks on specific communication link.
			\item [6:] \textbf{end procedure}
		\end{enumerate}
%	\end{algorithmic}
\end{algorithm}

\vspace{-0.4cm}
\section{ Secure Distributed Estimation Mechanism}

This section presents a meta-Bayesian approach for secure event-triggered DKF, which incorporates the outcome of the attack detection mechanism to perform second-order inference and consequently form beliefs over beliefs. That is, the second-order inference forms confidence and trust about the truthfulness or legitimacy of the sensors' own state estimate (i.e., the posterior belief of the first-order Bayesian inference) and those of its neighbor's state estimates, respectively. Each sensor communicates its confidence to its neighbors. Then sensors incorporate the confidence of their neighbors and their own trust about their neighbors into their posterior update laws to successfully discard the corrupted information. 

\vspace{-0.4cm}

\noindent 
\subsection{Confidence of sensor nodes}
The second-order inference forms a confidence value for each sensor node which determines the level of trustworthiness of the sensor about its own measurement and state estimate (i.e., the posterior belief of the first-order Bayesian inference).  If a sensor node is compromised, then the presented attack detector detects the adversary and it then reduces its level of trustworthiness about its own understanding of the environment and communicates it with its neighbors to inform them the significance of its outgoing information and thus slow down the attack propagation. 

To determine the confidence of the sensor node $i$, based on the divergence $\hat{D}_{KL} (P_{r_{i}^{a} } ||P_{r_{i} } )$ from Theorem 4, we first define
\vspace{-0.15cm}
\begin{equation} \label{ZEqnNum125869} 
	\chi _{i} (k)=\frac{\Upsilon _{1} }{\Upsilon _{1} +\hat{D}_{KL} (P_{r_{i}^{a} } ||P_{r_{i} } )} , 
\end{equation} 
with $0<\Upsilon _{1} <1$ represents a predefined threshold to account for the channel fading and other uncertainties. Then, in the following lemma, we formally present the results for the confidence of sensor node $i$.

\noindent \textbf{Lemma 1.} \textit{Let $\beta _{i} (k)$ be the confidence of the sensor node $i$ which is updated using 
	\begin{equation} \label{ZEqnNum359584} 
		\beta _{i} (k)=\sum _{l=0}^{k-1}(\kappa _{1} )^{k-l+1}  \chi _{i} (l), 
	\end{equation} 
	where $\chi _{i}(k)$ is defined in \eqref{ZEqnNum125869}, and $0<\kappa _{1}<1$ is a discount factor. Then, $\beta _{i}(k)\in (0,1]$ and 
	\begin{enumerate}
		\item  $\beta _{i} (k)\to 0,\, \, \, \forall i\in {\rm {\mathcal V}}^{c} ;$
		\item  $\beta _{i} (k)\to 1,\, \, \, \forall i\in {\rm {\mathcal V}}\backslash {\rm {\mathcal V}}^{c} .$
\end{enumerate}}

\begin{proof}
	Based on the expression \eqref{ZEqnNum125869}, since $\hat{D}_{KL} (P_{r_{i}^{a} } ||P_{r_{i} } )\ge 0$, one has $\chi _{i} (k)\in (0,1]$. Then, using \eqref{ZEqnNum359584}, one can infer that $\beta _{i} (k)\in (0,1]$. 
	
	Now according to Theorem 4, if the sensor node $i$ is under attack, then $\hat{D}_{KL} (P_{r_{i}^{a} } ||P_{r_{i} } )>>\Upsilon _{1} $ in \eqref{ZEqnNum125869}, which makes $\chi _{i}(k)$ close to zero. Then, based on expression \eqref{ZEqnNum359584} with the discount factor $0<\kappa _{1} <1,$ the confidence $\beta _{i}(k)$ in \eqref{ZEqnNum359584} approaches zero, and thus the $i^{th} $ sensor's belief about the trustworthiness of its own information would be low. This completes the proof of part 1. 
	
	On the other hand, based on Theorem 4, in the absence of attacks, $\hat{D}_{KL} (P_{r_{i}^{a} } ||P_{r_{i} } )\to 0$ as $w\to \infty $, which makes $\chi _{i} (k)$ close to one and, consequently, $\beta _{i} (k)$ becomes close to one. This indicates that the $i^{th}$ sensor node is confident about its own state estimate. This completes the proof of part 2.
\end{proof}

\vspace{-0.1cm}
Note that the expression for the confidence of sensor node $i$ in \eqref{ZEqnNum359584} can be implemented using the following difference equation
\vspace{-0.3cm}
\begin{equation} \label{53)} \nonumber
	\beta _{i} (k+1)=\beta _{i} (k)+\kappa _{1} \chi _{i} (k). 
\end{equation} 

Note also that the discount factor in \eqref{ZEqnNum359584} determines how much we value the current experience with regards to past experiences. It also guarantees that if the attack is not persistent and disappears after a while, or if a short-period adversary rather than attack (such as packet dropout) causes, the belief will be recovered, as it mainly depends on the current circumstances.

\vspace{-0.35cm}

\noindent 
\subsection{Trust  of sensor nodes about their incoming information}

Similar to the previous subsection, the second-order inference forms trust of sensor nodes to represent their level of trust on their neighboring sensor's state estimates. Trust decides the usefulness of the neighboring information in the state estimation of sensor node $i$. 

The trust of the sensor node $i$ on its neighboring sensor $j$ can be determined based on the divergence $\hat{D}_{KL} (P_{\zeta _{i,j}^{a} } ||P_{r_{i} })$ in \eqref{ZEqnNum178443} from Theorem 5, from which we define 
\begin{equation} \label{ZEqnNum846884} 
	\theta _{i,j} (k)=\frac{\Lambda _{1} }{\Lambda _{1} +\hat{D}_{KL} (P_{\zeta _{i,j}^{a} } ||P_{r_{i} } )} , 
\end{equation} 
where $0<\Lambda _{1} <1$ represents a predefined threshold to account for the channel fading and other uncertainties. Then, in the following lemma, we formally present the results for the trust of the sensor node $i$ on its neighboring sensor $j.$

\smallskip

\noindent \textbf{Lemma 2.} \textit{Let $\sigma _{i,j}(k)$ be the trust of the sensor node  $i$ on its neighboring sensor $j$ which is updated using 
	\begin{equation} \label{ZEqnNum805360} 
		\sigma _{i,j} (k)=\sum _{l=0}^{k-1}(\kappa _{2} )^{k-l+1}  \theta _{i,j} (l), 
	\end{equation} 
	where $\theta _{i,j}(k)$ is defined in \eqref{ZEqnNum846884}, and $0<\kappa _{2} <1$ is a discount factor. Then, $\sigma _{i,j}(k)\in (0,1]$ and 
	\begin{enumerate}
		\item  $\sigma _{i,j} (k)\to 0,\, \, \, \forall j\in {\rm {\mathcal V}}^{c} \cap N_{i} ;$
		\item  $\sigma _{i,j} (k)\to 1,\, \, \, \forall j\in {\rm {\mathcal V}}\backslash {\rm {\mathcal V}}^{c} \cap N_{i} .$
\end{enumerate}}

\begin{proof}
	The result follows a similar argument as given in the proof of Lemma 1.
\end{proof}
\vspace{-0.2cm}
Note that the trust of sensor node $i$ in \eqref{ZEqnNum805360} can be implemented using the following difference equation 
\vspace{-0.2cm}
\begin{equation} \label{56)} \nonumber
	\sigma _{i,j} (k+1)=\sigma _{i,j} (k)+\kappa _{2} \theta _{i,j} (k). 
\end{equation} 
Using the presented idea of trust, one can identify the attacks on the communication channel and discard the contribution of compromised information for the state estimation.

\vspace{-0.35cm}
\subsection{Attack mitigation mechanism using confidence and trust of sensors}
This subsection incorporates the confidence and trust of sensors to design a resilient event-triggered DKF. To this end, using the presented confidence $\beta _{i}(k)$ in \eqref{ZEqnNum359584} and trust $\sigma _{i,j}(k)$ in \eqref{ZEqnNum805360}, we design the resilient form of the event-triggered DKF as 
\begin{equation} \label{ZEqnNum565391} 
	\begin{array}{l} {\hat{x}_{i} (k)=\bar{x}_{i} (k)+K_{i} (k)(\beta _{i} (k)y_{i} (k)+(1-\beta _{i} (k))C_{i} m_{i} (k)-C_{i} \bar{x}_{i} (k))} \\ {\, \, \, \, \, \,\, \, \, \, \,\, \,\, \,\, \, \, \, \, \,\, \, \, \, +\gamma _{i} \sum _{j\in N_{i} }\sigma _{i,j} (k)\beta _{j} (k)(\tilde{x}_{j} (k)-\tilde{x}_{i}  (k)),} \end{array} 
\end{equation} 
where the weighted neighbor's state estimate $m_{i}(k)$ is defined as
\vspace{-0.2cm}
\begin{equation} \label{ZEqnNum466700} 
	\begin{array}{l} {m_{i} (k)=\frac{1}{\left|N_{i} \right|} \sum _{j\in N_{i} }\sigma _{i,j} (k)\beta _{j} (k)\tilde{x}_{j} (k) \approx x(k)+\varepsilon _{i} (k),\, \, \, } \\ {\, \, \, \, \, \, \, \, \, \, \, \, \, \, \qquad \forall k\, \, \, \left\| \varepsilon _{i} (k)\right\| <\tau ,} \end{array} 
\end{equation} 
where $\varepsilon _{i}(k)$ denotes the deviation between the weighted neighbor's state estimate $m_{i} (k)$ and the actual process state $x(k)$. Note that in \eqref{ZEqnNum466700} the weighted state estimate depends on the trust values $\sigma _{i,j} (k)$ and the confidence values $\beta _{j} (k),\, \, \forall j\in N_{i}.$ Since the weighted state estimate depends only on the information from intact neighbors, then one has $\left\| \varepsilon _{i} (k)\right\| <\tau$ for some $\tau >0,\, \, \forall k.$ For the sake of mathematical representation, we approximate the weighted state estimate $m_{i}(k)$ in terms of the actual process state $x(k)$, i.e., $m_{i}(k)\approx x(k)+\varepsilon _{i} (k).$ We call this a meta-Bayesian inference that integrates the first-order inference (state estimates) with second-order estimates or belief (trust and confidence on the trustworthiness of state estimate beliefs). 

Define the prior and predictive state estimation errors as
\begin{equation} \label{ZEqnNum250987} 
	\begin{array}{l} {\bar{\eta }_{i} (k)=x(k)-\bar{x}_{i} (k)} \\ {\tilde{\eta }_{i} (k)=x(k)-\tilde{x}_{i} (k),} \end{array} 
\end{equation} 
Using the threshold in triggering mechanism (\ref{eq3x}), one has 
\begin{equation} \label{ZEqnNum528573} 
	\begin{array}{l} {\left\| \tilde{\eta }_{i} (k)\right\| -\left\| x(k+1)-x(k)+v_{i} (k+1)\right\| \le \alpha /\left\| C_{i} \right\| ,} \\ {\left\| \tilde{\eta }_{i} (k)\right\| \le \alpha /\left\| C_{i} \right\| +{\rm {\mathcal B}},} \end{array} 
\end{equation} 
where ${\rm {\mathcal B}}$ denotes the bound on $\left\| x(k+1)-x(k)+v_{i} (k+1)\right\| .$ 

\noindent Other notations used in the following theorem are given by 
\begin{equation} \label{ZEqnNum500695} 
	\begin{array}{l} {\bar{\eta }(k)=[\bar{\eta }_{1} (k),\ldots ,\bar{\eta }_{N} (k)],\, \, \, M(k)=diag[M_{1} (k),\ldots ,M_{N} (k)]} \\ {\Upsilon =diag[\gamma _{1} ,\ldots ,\gamma _{N} ],\, \, \Upsilon _{m} =\left\| \max \{ \gamma _{i} \} \right\| ,\, \, \forall i \in \mathcal{V}}, \\ {\bar{\beta }=(I_{N} -diag(\beta _{i} )),\, \, \, \, E(k)=[\varepsilon _{1} (k),\ldots ,\varepsilon _{N} (k)],} \\ {\tilde{\eta }(k)=[\tilde{\eta }_{1} (k),\ldots ,\tilde{\eta }_{N} (k)].} \end{array} 
\end{equation} 

\noindent
\textbf{Assumption 4.} At least $({\rm {\mathcal C}}(N_{i} )/2)+1$ neighbors of the sensor node $i$ are intact. 

Assumption 4 is similar to the assumption found in the secure estimation and control literature \cite{c19}, \cite{c29}. Necessary and sufficient condition for any centralized or distributed estimator to resiliently estimate actual state is that the number of attacked sensors is less than half of all sensors.

\smallskip
%\cite{c13}
\noindent
\noindent \textbf{Remark 6.} {Note that the proposed notion of trust and confidence for hybrid attacks on sensor networks for event-triggered DKF can also be seen as the weightage in the covariance fusion approach. Although covariance intersection-based Kalman consensus filters have been widely used in the literature to deal with unknown correlations in sensor networks (for instants see \cite{c10}-\cite{d10} and \cite{c310}-\cite{c312}), most of these results considered the time-triggered distributed state estimation problem with or without any adversaries.  Compared with the existing results, however, a novelty of this work lies in detecting and mitigating the effect of attacks on sensors and communication channels for event-triggered DKF and providing a rigorous mathematical analysis for different triggering misbehaviors.}

\begin{theorem}
	Consider the resilient event triggered DKF \eqref{ZEqnNum565391} with the triggering mechanism (\ref{eq3x}). Let the time-varying graph be ${\rm {\mathcal G}}(k)$ such that at each time instant $k,$ Assumptions 3 and 4 are satisfied. Then,  
	\begin{enumerate}
		\item The following uniform bound holds on state estimation error in \eqref{ZEqnNum250987}, despite attacks
		\vspace{-0.2cm}
		\begin{equation} \label{ZEqnNum232225} 
			\left\| \bar{\eta }(k)\right\| \le (A_{o} )^{k} \left\| \bar{\eta }(0)\right\| +\sum _{m=0}^{k-1}(A_{o} )^{k-m-1}  B_{o} , 
		\end{equation} 
		where
		\vspace{-0.2cm}
		\begin{equation} \label{ZEqnNum594295} 
			\begin{array}{l} {A_{o} =\sigma _{\max } ((I_{N} \otimes A)M(k)),} \\ {B_{o} =\sigma _{\max } (A)\sigma _{\max } ({\rm {\mathcal L}}(k))\Upsilon _{m} \sqrt{N(} \alpha /\left\| C_{i} \right\| +{\rm {\mathcal B}})} \\ {\, \, \, \, \,\, \, \, \, \, \, \,\, \, \, \, \, \, \, \, +(\sigma _{\max } (A)+\sigma _{\max } (A_{o} ))\left\| \bar{\beta }\right\| \sqrt{N} \tau,} \end{array} 
		\end{equation} 
		with ${\rm {\mathcal L}}(k)$ denotes the confidence and trust dependent time-varying graph Laplacian matrix, and bound $\tau $ defined in \eqref{ZEqnNum466700};
		\item The uniform bound on the state estimation error \eqref{ZEqnNum232225} becomes
		\vspace{-0.15cm}
		\begin{equation} \label{64)} 
			{\mathop{\lim }\limits_{k\to \infty }} \left\| \bar{\eta }(k)\right\| \le \frac{A_{o} B_{o} }{1-A_{o} }.
		\end{equation} 
	\end{enumerate}
	Moreover, other notations used in \eqref{ZEqnNum594295} are defined in \eqref{ZEqnNum500695}. 
\end{theorem}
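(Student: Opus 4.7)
The plan is to derive a recursion for the aggregated prior estimation error $\bar{\eta}(k)$, bound it term by term, and then iterate the resulting scalar inequality. I would begin from the resilient update law \eqref{ZEqnNum565391}, apply $A$ to obtain $\bar{x}_i(k+1)=A\hat{x}_i(k)$, and subtract from $x(k+1)=Ax(k)+w(k)$ to form the per-sensor error dynamics. Substituting $y_i(k)=C_ix(k)+v_i(k)$ and $m_i(k)=x(k)+\varepsilon_i(k)$ from \eqref{ZEqnNum466700}, the sensing innovation reduces to $\beta_i(k)y_i(k)+(1-\beta_i(k))C_im_i(k)-C_i\bar{x}_i(k)=C_i\bar{\eta}_i(k)+\beta_i(k)v_i(k)+(1-\beta_i(k))C_i\varepsilon_i(k)$, which lets me rewrite the prior error as $\bar{\eta}_i(k+1)=AM_i(k)\bar{\eta}_i(k)-\gamma_i\sum_{j\in N_i}\sigma_{i,j}(k)\beta_j(k)A(\tilde{x}_j(k)-\tilde{x}_i(k))-(1-\beta_i(k))AK_i(k)C_i\varepsilon_i(k)+\text{(noise)}$, after adding and subtracting $Ax(k)$ inside the consensus sum to express $\tilde{x}_j(k)-\tilde{x}_i(k)=\tilde{\eta}_i(k)-\tilde{\eta}_j(k)$.

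Next I would stack the per-sensor errors using the notation in \eqref{ZEqnNum500695} to get a vector recursion of the form
\begin{equation}\nonumber
\bar{\eta}(k+1)=(I_N\otimes A)M(k)\,\bar{\eta}(k)-(I_N\otimes A)\mathcal{L}(k)\tilde{\eta}(k)-(I_N\otimes A)\bar{\beta}\,KC\,E(k)+\text{(bounded noise)},
\end{equation}
where $\mathcal{L}(k)$ is the trust/confidence-weighted Laplacian. Taking norms and using submultiplicativity of the spectral norm yields $\|\bar{\eta}(k+1)\|\le A_o\|\bar{\eta}(k)\|+\sigma_{\max}(A)\sigma_{\max}(\mathcal{L}(k))\Upsilon_m\|\tilde{\eta}(k)\|+\sigma_{\max}(A)\|\bar{\beta}\|\,\|E(k)\|$ plus additional terms that I would absorb into the $A_o\|\bar{\eta}(k)\|$ factor using $\|E(k)\|\le\sqrt{N}\tau$ and the identity $\tilde{\eta}=\bar{\eta}-(x-\tilde{x}+\bar{x}-x)$ manipulations. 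The triggering bound \eqref{ZEqnNum528573} gives $\|\tilde{\eta}(k)\|\le\sqrt{N}(\alpha/\|C_i\|+\mathcal{B})$, and the trust/confidence mechanism of Lemmas~1--2 together with Assumption~4 guarantees that compromised neighbors are weighted down so that $\|\varepsilon_i(k)\|<\tau$ uniformly. Collecting the constant contributions into $B_o$ as defined in \eqref{ZEqnNum594295} produces the scalar recursion $\|\bar{\eta}(k+1)\|\le A_o\|\bar{\eta}(k)\|+B_o$.

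Part~1 then follows by straightforward induction (a discrete Grönwall/comparison step): unrolling the recursion from $k=0$ yields exactly \eqref{ZEqnNum232225}. For Part~2, provided the consensus gain $\gamma_i$ and Kalman gain $K_i(k)$ are chosen so that $A_o=\sigma_{\max}((I_N\otimes A)M(k))<1$ (which is the standard stability premise inherited from Remark~2), the geometric series $\sum_{m=0}^{k-1}A_o^{k-m-1}$ converges to $1/(1-A_o)$, the initial-condition term $A_o^k\|\bar{\eta}(0)\|$ vanishes, and one obtains the steady-state bound $A_oB_o/(1-A_o)$.

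The main obstacle I anticipate is algebraic rather than conceptual: carefully separating the compromised from the intact contributions inside the consensus sum so that the weighted mean $m_i(k)$ genuinely satisfies the $\varepsilon_i$-bound in \eqref{ZEqnNum466700}. This is where Assumption~4 and the limits in Lemmas~1 and 2 must be invoked, since without the majority-intact condition the trust-weighted average need not approximate $x(k)$, and the constant $\tau$ in $B_o$ would be uncontrolled. Once that reduction is done cleanly, matching the precise form of $B_o$ in \eqref{ZEqnNum594295}---in particular the coefficient $\sigma_{\max}(A)+\sigma_{\max}(A_o)$ multiplying $\|\bar{\beta}\|\sqrt{N}\tau$---is a matter of keeping track of the $AK_iC_i\varepsilon_i$ term alongside the contribution of $\varepsilon_i$ inside the consensus sum after adding and subtracting $x(k)$.
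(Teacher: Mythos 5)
Your proposal follows essentially the same route as the paper's proof: form the per-sensor error dynamics from \eqref{ZEqnNum565391} using $m_i(k)=x(k)+\varepsilon_i(k)$, stack them into the global recursion $\bar{\eta}(k+1)=(I_N\otimes A)M(k)\bar{\eta}(k)-(\Upsilon\otimes A)\mathcal{L}(k)\tilde{\eta}(k)-(\bar{\beta}\otimes A)(I-M(k))E(k)$, bound the driving terms via the triggering inequality \eqref{ZEqnNum528573} and $\|E(k)\|\le\sqrt{N}\tau$, and unroll the resulting scalar recursion; the only cosmetic difference is that the paper justifies $A_o<1$ through collective observability (Assumption 3 and the cited Kalman-filter stability result) rather than through Remark 2. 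One small remark: your own geometric-series computation actually yields the limit $B_o/(1-A_o)$ rather than $A_oB_o/(1-A_o)$, an inconsistency you inherit from the theorem statement itself.
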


\begin{proof}
	Using the presented resilient estimator \eqref{ZEqnNum565391}, one has
	\begin{equation} \label{ZEqnNum429555} 
		\begin{array}{l} {\bar{x}_{i} (k+1)=A\hat{x}_{i} (k)} \\  \, \, \,\, \, \, \,\,\, \, \, \, \, {=A(\bar{x}_{i} (k)+K_{i} (k)(\beta _{i} (k)y_{i} (k)+(1-\beta _{i} (k))C_{i} m_{i} (k)} \\ \, \, \,\, \, \, \,\,\, \, \quad {-C_{i} \bar{x}_{i} (k))\, +\gamma _{i} \sum _{j\in N_{i} }\sigma _{i,j} (k)\beta _{j} (k)(\tilde{x}_{j} (k)-\tilde{x}_{i}  (k))),} \end{array} 
	\end{equation} 
	Substituting \eqref{ZEqnNum466700} into \eqref{ZEqnNum429555} and using \eqref{ZEqnNum250987}, the state estimation error dynamics becomes 
	\begin{equation} \label{ZEqnNum162461} 
		\begin{array}{l} {\bar{\eta }_{i} (k+1)=AM_{i} (k)\bar{\eta }_{i} (k)+A\gamma _{i} \sum _{j\in N_{i} }a_{ij} (k)(\tilde{\eta }_{j} (k)-\tilde{\eta }_{i} (k) )} \\ {\, \, \, \, \, \, \, \, \, \, \, \, \, \, \, \, \, \, \, \, \, \, \, \,\, \, \, \, \, \,\, \,\, \, \, \, \, \, \, \, \,\, \, -AK_{i} (k)(1-\beta _{i} (k))C_{i} \varepsilon _{i} (k),} \end{array} 
	\end{equation} 
	where $a_{ij} (k)=\sigma _{i,j} (k)\beta _{j} (k)$ and $M_{i} (k)=I-K_{i} (k)C_{i} $. 
	
	\noindent Using \eqref{ZEqnNum162461} and notations defined in \eqref{ZEqnNum500695}, the global form of error dynamics becomes
	\begin{equation} \label{ZEqnNum454905} 
		\begin{array}{l} {\bar{\eta }(k+1)=(I_{N} \otimes A)M(k)\bar{\eta }(k)-(\Upsilon \otimes A){\rm {\mathcal L}}(k)\tilde{\eta }(k)} \\ \, \, \, \, \, \, \, \, \, \, \, \, \, \, \, \, \, \, \, \, \, \, \, \,\, \, \, \, \, \,\, \,\, \,\, \, \, \,\,\, \, \,{-(\bar{\beta }\otimes A)(I_{nN} -M(k))E(k)).} \end{array} 
	\end{equation} 
	
	Note that Assumption 4 implies that the total number of the compromised sensors is less than half of the total number of sensors in the network. That is, if $q$ neighbors of an intact sensor node are attacked and collude to send the same value to mislead it, there still exists $q+1$ intact neighbors that communicate values different from the compromised ones. Moreover, since at least half of the intact sensor's neighbors are intact, it can update its beliefs to discard the compromised neighbor's state estimates. Furthermore, since the time-varying graph ${\rm {\mathcal G}}(k)$ resulting from isolating the compromised sensors, based on Assumptions 3 and 4, the entire network is still collectively observable. Using the trust and confidence of neighboring sensors, the incoming information from the compromised communication channels is discarded. 
	
	Now taking norm of equation \eqref{ZEqnNum454905} from both sides and then using the triangular inequality, one has
	\begin{equation} \label{ZEqnNum800097} 
		\begin{array}{l} {\left\| \bar{\eta }(k+1)\right\| \le \left\| (I_{N} \otimes A)M(k)\bar{\eta }(k)\right\| +\left\| (\Upsilon \otimes A){\rm {\mathcal L}}(k)\tilde{\eta }(k)\right\| } \\\, \, \, \, \, \, \, \, \, \, \, \, \, \, \, \, \, \, \, \, \, \, \, \,\, \, \, \, \, \,\, \,\, \, \, \, \, \, \, \, \,\, \, {+\left\| (\bar{\beta }\otimes A)(I_{nN} -M(k))E(k)\right\| .} \end{array} 
	\end{equation} 
	Using \eqref{ZEqnNum466700}, \eqref{ZEqnNum800097} can be rewritten as
	\begin{equation} \label{ZEqnNum810116} 
		\begin{array}{l} {\left\| \bar{\eta }(k+1)\right\| \le A_{o} \left\| \bar{\eta }(k)\right\| +\sigma _{\max } ({\rm {\mathcal L}}(k))\left\| (\Upsilon \otimes A)\tilde{\eta }(k)\right\| } \\ \, \, \, \, \, \, \, \, \, \, \, \, \, \, \, \, \, \, \, \, \, \, \, \,\, \, \, \, \, \,\, \,\, \, \, \, \, \, \, \, \,\, \, {+\left\| (\bar{\beta }\otimes A)-(\bar{\beta }\otimes I_{n} )(I_{N} \otimes A)M(k))E(k)\right\| .} \end{array} 
	\end{equation} 
	{After some manipulations, equation \eqref{ZEqnNum810116} becomes} 
	\begin{equation} \label{ZEqnNum297239} 
		\begin{array}{l} {\left\| \bar{\eta }(k+1)\right\| \le A_{o} \left\| \bar{\eta }(k)\right\| +\sigma _{\max } (A)\sigma _{\max } ({\rm {\mathcal L}}(k))\Upsilon _{m} \left\| \tilde{\eta }(k)\right\| } \\ {\, \, \, \, \, \, \, \, \, \, \, \, \, \, \, \, \, \, \, \, \, \, \, \,\, \, \, \, \, \,\, \,\, \, \, \, \, \, \, \, \,\, \, +(\sigma _{\max } (A)+\sigma _{\max } (A_{o} ))\left\| \bar{\beta }\right\| \sqrt{N} \tau ,} \end{array} 
	\end{equation} 
	with $\Upsilon _{m}$ defined in \eqref{ZEqnNum500695}. Then, using \eqref{ZEqnNum528573}, one can write  \eqref{ZEqnNum297239} as 
	\begin{equation} \label{ZEqnNum560131} 
		\begin{array}{l} {\left\| \bar{\eta }(k+1)\right\| \le A_{o} \left\| \bar{\eta }(k)\right\|  +(\sigma _{\max } (A)+\sigma _{\max } (A_{o} ))\left\| \bar{\beta }\right\| \sqrt{N} \tau} \\ {\, \, \, \, \, \, \, \, \, \, \, \, \, \, \, \, \, \, \, \, \, \, \, \,\, \, \, \, \, \,\, \,\, \, \, \, \, \, \, \, \,\, \,+\sigma _{\max } (A)\sigma _{\max } ({\rm {\mathcal L}}(k))\Upsilon _{m} \sqrt{N(} \alpha /\left\| C_{i} \right\| +{\rm {\mathcal B}})},  \end{array} 
	\end{equation} 
	After solving \eqref{ZEqnNum560131}, one has
	\vspace{-0.2cm}
	\begin{equation} \label{ZEqnNum925065} 
		\left\| \bar{\eta }(k)\right\| \le (A_{o} )^{k} \left\| \bar{\eta }(0)\right\| +\sum _{m=0}^{k-1}(A_{o} )^{k-m-1}  B_{o} , 
	\end{equation} 
	where $A_{0}$ and $B_{0}$ are given in \eqref{ZEqnNum594295}. This completes the proof of part 1.  Based on Assumption 3, the distributed sensor network is always collectively observable. Thus, based on result provided in \cite{d6}, one can conclude that $A_{0}$ in \eqref{ZEqnNum925065} is always Schur and then the upper bound on state estimation error becomes \eqref{64)}. This completes the proof.
\end{proof}
%\begin{equation} \label{ZEqnNum551131} 
%{\mathop{\lim }\limits_{k\to \infty }} \left\| \bar{\eta }(k)\right\| \le \frac{A_{o} %B_{o} }{1-A_{o} } . 
%\end{equation} 
\vspace{-0.15cm}

\noindent { \textbf{Remark 7.} 
	To recap, Theorems 1-3 aim to provide us theoretical analysis to show the vulnerability of event-triggered DKF mechanism to deception attack consist of reply attack and false data injection attack. Moreover, Theorems 4-5 aim to build a mechanism to detect these types of attacks on event-triggered DKF and to mitigate the effects of them. To this aim, the results of Theorems 4 and 5 are essential to develop an entropy estimation-based attack detection approach for the event-triggered DKF, and Theorem 6 and corresponding Algorithm 3 complete the machinery required for mitigation scheme by estimating the actual state based on the attack detection approach presented in Algorithms 1 and 2. }

% Based on the attack detection approach presented in Algorithms 1 and 2, one can detect the attacker's misbehavior and estimate the actual state using the result presented in Theorem 6 and Algorithm 3. 
\begin{algorithm}[!ht]
	\caption{Secure Distributed Estimation Mechanism (SDEM).}
%	\begin{algorithmic}
		\begin{enumerate}
			\item [1:] Start with initial innovation sequences and design parameters $\Upsilon _{1} $ and $\Lambda _{1}$.
			\item [2:] \textbf{procedure $\forall i=1,\ldots ,N$ }
			\item [3:] Use samples of innovation sequences $\{ r_{i}^{a} (l),\ldots, $ \qquad $r_{i}^{a} (l-1+w)\}$ and $\{ r_{i} (l),\ldots ,r_{i} (l-1+w)\} $ based on \eqref{ZEqnNum255093} and \eqref{ZEqnNum276515}, $\forall l\in k.$
			\item [4:] Estimate the $\hat{D}_{KL} (P_{r_{i}^{a} } ||P_{r_{i} } )$ using \eqref{ZEqnNum750552}.
			\item [5:] Based on \eqref{ZEqnNum125869}-\eqref{ZEqnNum359584}, compute confidence $\beta _{i} (k)$ as
			\begin{equation}\label{Alg1}
				\beta _{i} (k)=\Upsilon _{1} \sum _{l=0}^{k-1}\frac{(\kappa _{1} )^{k-l+1} }{\Upsilon _{1} +\hat{D}_{KL} (P_{r_{i}^{a} } ||P_{r_{i} } )}.
			\end{equation}  
			\item [6:] For each sensor node $j\in N_{i} $, use samples of innovation sequences $\{ \zeta _{i,j}^{a} (l),\ldots ,\zeta _{i,j}^{a} (l-1+w)\}$ and $\{ r_{i} (l),\ldots ,r_{i} (l-1+w)\}$ based on \eqref{ZEqnNum178443} and \eqref{ZEqnNum276515}, $\forall l\in k.$ 
			\item [7:] Estimate the $\hat{D}_{KL} (P_{\zeta _{i,j}^{a} } ||P_{r_{i} } )$ using \eqref{ZEqnNum691139}.
			\item [8:] Using \eqref{ZEqnNum846884}-\eqref{ZEqnNum805360}, compute trust $\sigma _{i,j}(k)$ as
			\begin{equation}\label{Alg2}
				\sigma _{i,j} (k)=\Lambda _{1} \sum _{l=0}^{k-1}\frac{(\kappa _{2} )^{k-l+1} }{\Lambda _{1} +\hat{D}_{KL} (P_{\zeta _{i,j}^{a} } ||P_{r_{i} } )}  \theta _{i,j} (l).
			\end{equation}
			\item [9:] Using the sensor measurement $y_{i} (k)$ with the confidence $\beta _{i} (k)$ {in \eqref{Alg1}}, the trust on neighbor's $\sigma _{i,j} (k)$ {in \eqref{Alg2}} and neighbor's state estimates $\tilde{x}_{j} (k),\, \, \forall j\in N_{i} $, update the resilient state estimator in \eqref{ZEqnNum565391}.
			\item [10:] \textbf{end procedure}
		\end{enumerate}
%	\end{algorithmic}
\end{algorithm}

\vspace{-0.2cm}

\section{ Simulation Results}
In this section, we discuss simulation results to demonstrate the efficacy of presented attack detection and mitigation mechanism.

{Consider the following simple longitudinal-direction cruise dynamics of an autonomous underwater vehicle (AUV)
	\begin{align}
		{ \left[ {\begin{array}{*{20}{c}}
					{x(k + 1)}\\
					{v(k + 1)}
			\end{array}} \right] = \left[ {\begin{array}{*{20}{c}}
					0&1\\
					0&{ - b/m}
			\end{array}} \right]\left[ {\begin{array}{*{20}{c}}
					{x(k)}\\
					{v(k)}
			\end{array}} \right] + \left[ {\begin{array}{*{20}{c}}
					0\\
					{u(k)/m}
			\end{array}} \right] + w(k)}
		\label{72mj}
	\end{align}
	where $x$ is the longitudinal position, $v$ is the velocity in $X$-direction,  $m=1000~Kg$ is mass, $b=50~{{N{\mathop{\rm Sec}\nolimits} } \mathord{\left/
			{\vphantom {{N{\mathop{\rm Sec}\nolimits} } m}} \right.
			\kern-\nulldelimiterspace} m}$ is an coefficient corresponding to friction and hydrodynamic drag, $w$ is  disturbance force (with Gaussian distribution) generated by underwater and tidal currents, and $u(k)= 1050v(k)$ is the force applied by engine.} 

{Now, consider a scenario in which a sensor network installed undersea with the communication graph topology shown in Fig \ref{fig2} to estimate the longitudinal position and velocity of an AUV cruising undersea.}   
% The sensor network assumed to have following undirected graph topology as given in Fig. 2 with objective to follow the desired process dynamics.

\noindent
{The closed-loop dynamical system \eqref{72mj} can be seen as an autonomous exogenous system (exosystem \cite{MJTACpaper}) as
	% Consider the process dynamics in \eqref{ZEqnNum820040} for generating the target trajectory as
	\begin{equation}
		% x(k+1) =\left[\begin{array}{cc} {\cos (\pi /200)} & {-\sin (\pi /200)} \\ {\sin (\pi /200)} & {\cos (\pi /200)} \end{array}\right]x(k) \, +\, w(k), 
		Z(k + 1) = \underbrace {\left[ {\begin{array}{*{20}{c}}
					0&1\\
					0&1
			\end{array}} \right]}_AZ(k) + w(k)
		\label{72)} 
	\end{equation} 
	where $Z(k) = {\left[ {\begin{array}{*{20}{c}}
				{x(k)}&{v(k)}
		\end{array}} \right]^T}$. Now, let the observation matrix $C_{i} $ in \eqref{ZEqnNum687942}, noise covariances, and initial state, respectively, be chosen as
	% \vspace{-0.2cm}
	\begin{equation} 
		C_{i} =[0\, \, 3;0\, \, 2],\, \, \, \, \, Q=I_{2} ,\, \, \, \, \, R_{i} =\, I_{2} ,\, \, \, \, \, Z_{0} =(0,0)
		% \vspace{-0.2cm}
		\label{721)} 
	\end{equation} 
	As one can see, the pairs $(A,C_{i} )$ are not observable which indicates that each one of these sensors cannot estimate the longitudinal position and velocity of an AUV individually. Note that, however, \eqref{72)} and \eqref{ZEqnNum687942} with $C_{i}$ given in \eqref{721)} are collectively observable.}
% \vspace{-0.1cm}
\begin{figure}[!ht]
	\begin{center}
		\includegraphics[width=2.3in,height=1.8in]{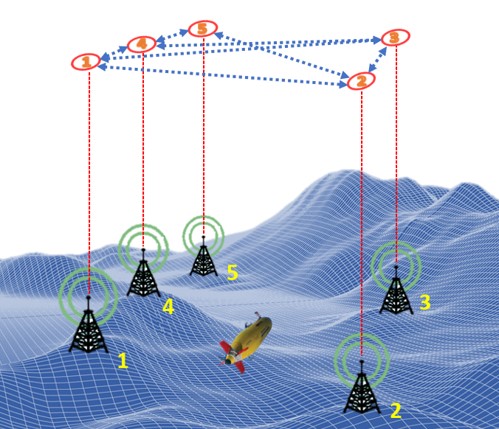}
		\caption{{The scenario of distributed estimating of the position and velocity of an AUV.}}\label{fig2}
		\captionsetup{justification=centering}
	\end{center}
\end{figure}
% \vspace{-0.7cm}

{For intact sensor network, based on the dynamics \eqref{72)} with covariances given in \eqref{721)}, as depicted in Fig.~\ref{fig:MJ1}, the state estimation errors converge to zero (in the mean square sense) for each sensor node and as the result the state estimations of sensors converge to the true states. Moreover, the event generation based on the event-triggering mechanism in (\ref{eq3x}) with the triggering threshold $\alpha =1.35$ is shown in Fig.~\ref{fig:MJ2}.}

\begin{figure}[!t] 
	\begin{minipage}{0.24\textwidth}
		\includegraphics[width=1.1\linewidth, height=29mm]{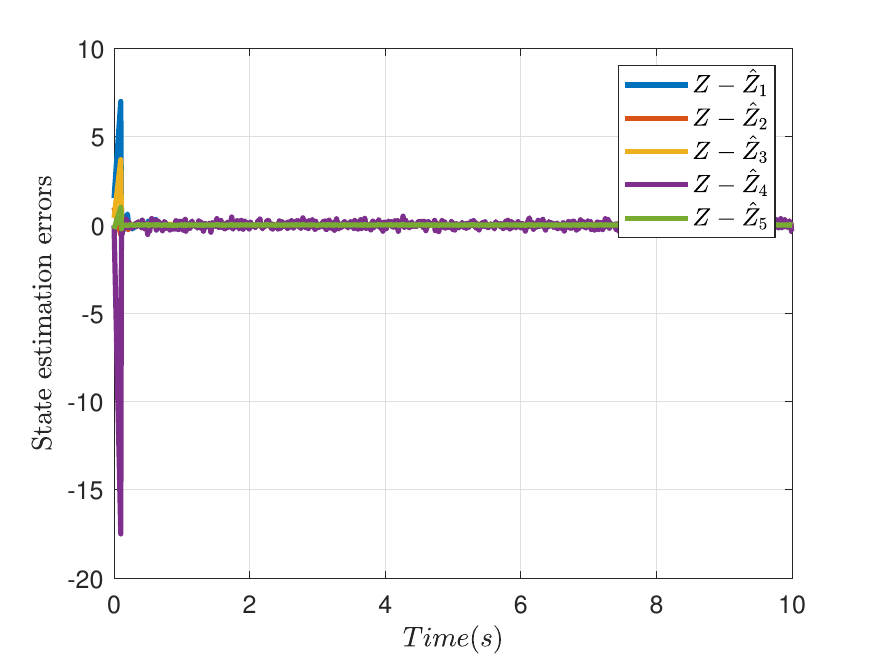} 
		\caption{}
		\label{fig:MJ1}
	\end{minipage}
	\begin{minipage}{0.24\textwidth}
		\includegraphics[width=1.1\linewidth, height=28mm]{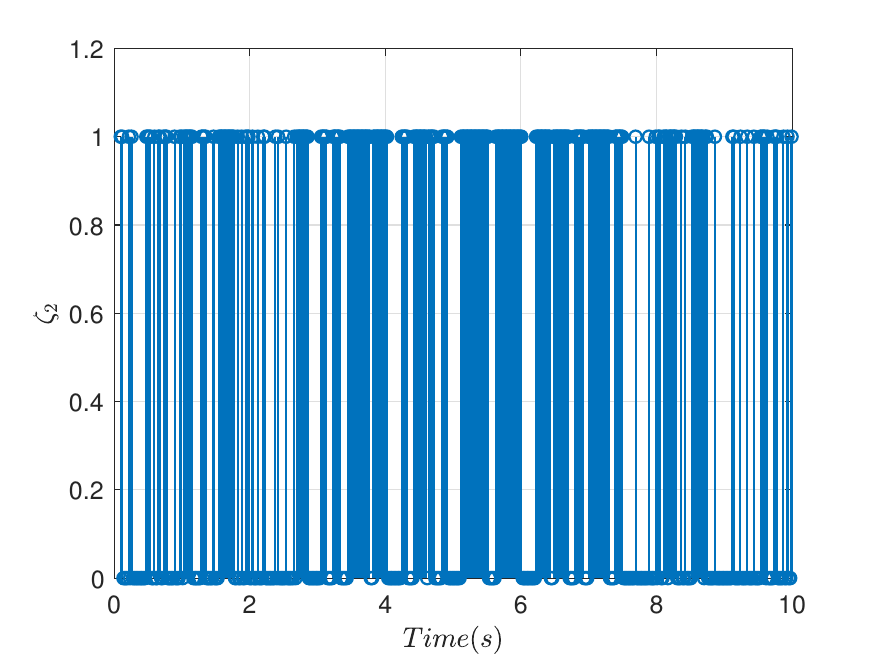}
		\caption{}
		\label{fig:MJ2}
	\end{minipage} 
	\caption{Sensor network without any attack. (a) State estimation errors   (b) Transmit function for sensor node $2$.}
	% \label{MJ1}
	\vspace{-0.35cm}
\end{figure}
{Now, assume that sensor node $2$ in the network is compromised with the adversarial input $\delta _{2} (k)= 9\sin (100k)$ after the time instance $t = 10~Sec$. 
	Fig.~\ref{fig:MJ3} shows the attacker's effect on sensor node $2$ and one can see that the compromised sensors and other sensors in the network deviates from desired target state and results in non-zero estimation error based on attacker's input.}

{Fig.~\ref{fig:MJ4} illustrates the event generation based on the event-triggering mechanism in (\ref{eq3x}) in the presence of attack. Fig.~\ref{fig:MJ4} shows that after injection of the attack on sensor node $2$, the event-triggered system becomes time-triggered and demonstrates continuous-triggering misbehavior. This result follows the analysis presented for the continuous-triggering misbehavior. The results for non-triggering misbehavior for sensor node $2$ is depicted in Figs.~\ref{fig:MJ5}-\ref{fig:MJ6} which follow the presented analysis.}
\begin{figure}[H] 
	\begin{minipage}{0.24\textwidth}
		\includegraphics[width=1.1\linewidth, height=29mm]{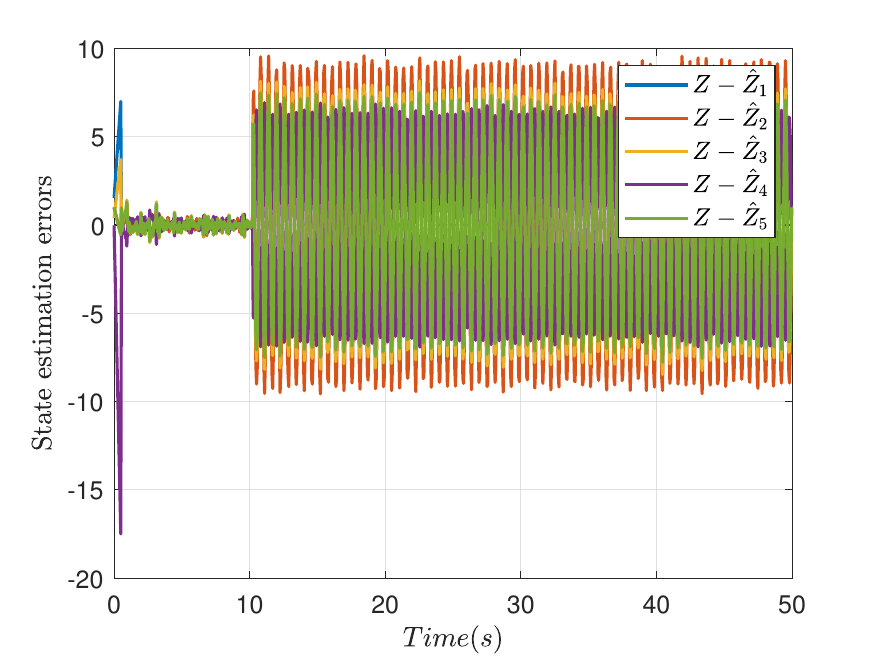} 
		\caption{}
		\label{fig:MJ3}
	\end{minipage}
	\begin{minipage}{0.24\textwidth}
		\includegraphics[width=1.1\linewidth, height=28mm]{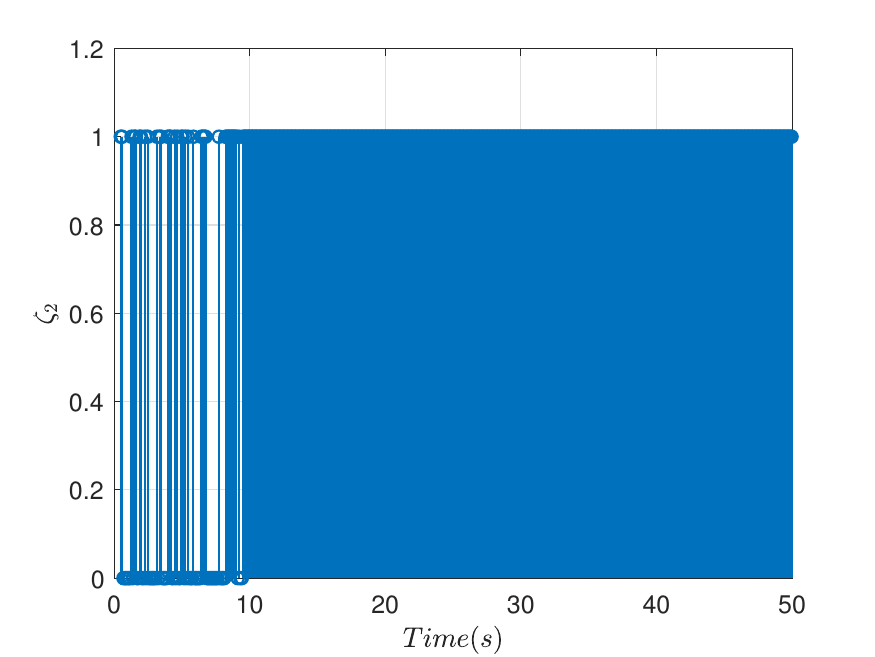}
		\caption{}
		\label{fig:MJ4}
	\end{minipage} 
	\caption{Sensor node $2$ under continuous-triggering misbehavior. (a) State estimation errors  (b) Transmit function for sensor node $2$.}
	% \label{m1}
	\vspace{-0.35cm}
\end{figure}

\noindent
{Using the presented attack detection mechanism, one can detect the effect of the attack on the sensor nodes. Fig.~\ref{fig:MJ7}  illustrates the result for estimated KL divergence-based attack detection mechanism and it shows that after the injection of attack signal into sensor node 2 at $t=10~Sec$ the estimated KL divergence starts increasing for compromised sensor node 2. The estimated divergence for the compromised sensor, i.e., sensor node 2 grows after attack injection at $t=10~Sec$ which follows the result presented in Theorem 4.}
\begin{figure}[H] 
	\begin{minipage}{0.24\textwidth}
		\includegraphics[width=1.1\linewidth, height=29mm]{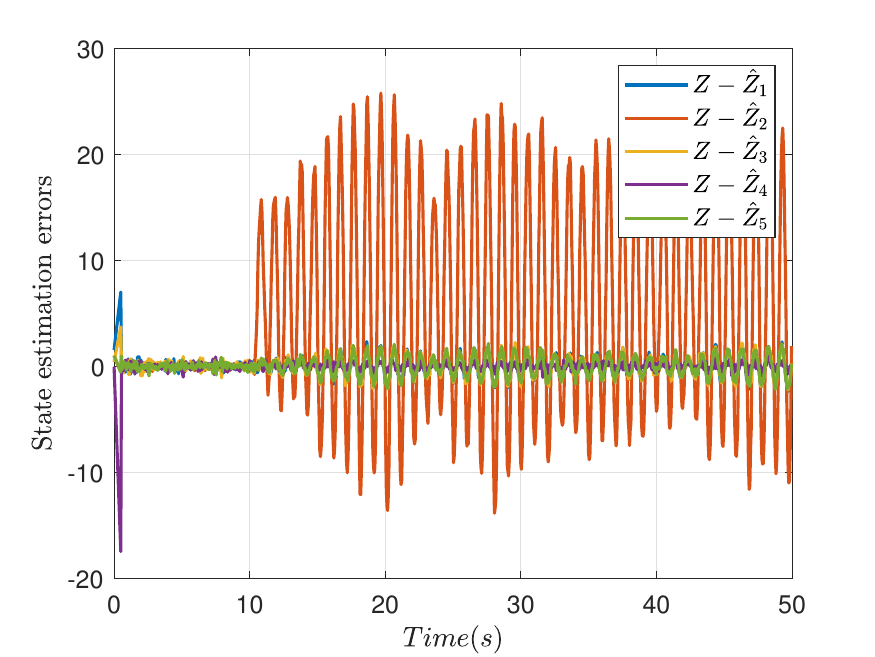} 
		\caption{}
		\label{fig:MJ5}
	\end{minipage}
	\begin{minipage}{0.24\textwidth}
		\includegraphics[width=1.1\linewidth, height=28mm]{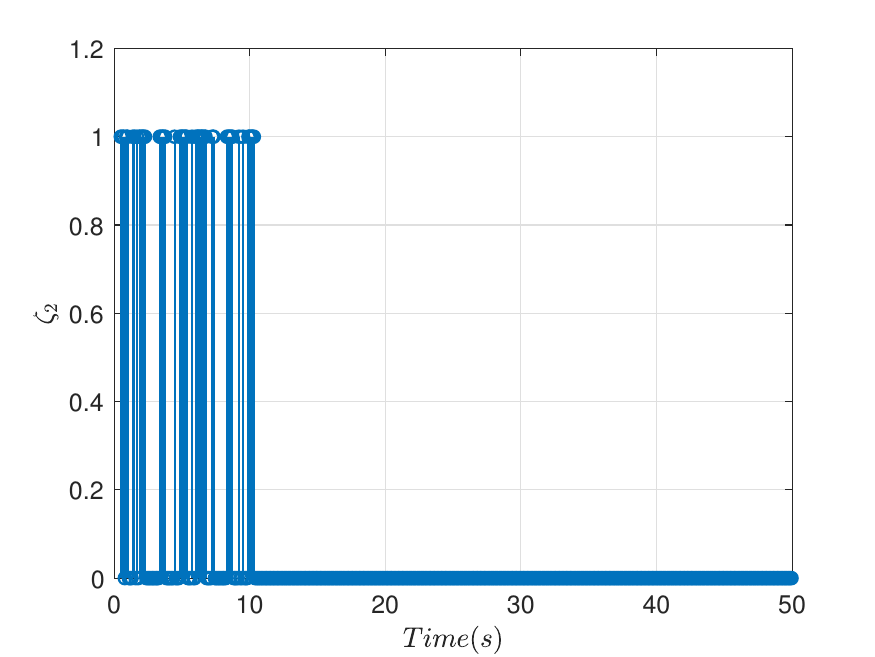}
		\caption{}
		\label{fig:MJ6}
	\end{minipage} 
	\caption{Sensor node $2$ under non-triggering misbehavior. (a) State estimation errors  (b) Transmit function for sensor node $2$.}
	% \label{m1}
	\vspace{-0.35cm}
\end{figure}

\begin{figure}[!t]
	\centering{\includegraphics [width=2.3in] {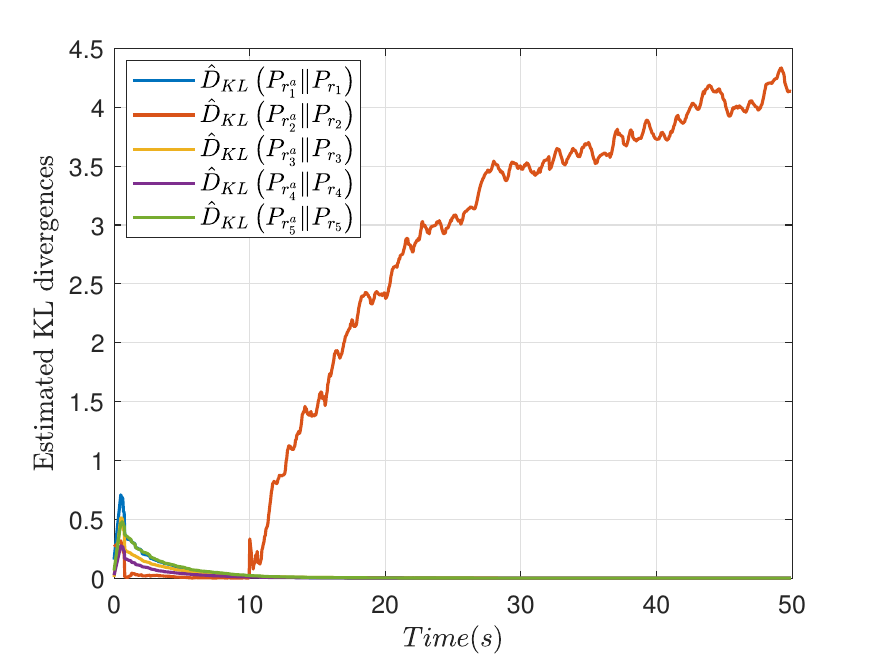}}
	\caption{{Estimated  KL divergences in the case that sensor node $2$ is under non-triggering attack.}}
	\label{fig:MJ7}
\end{figure}

{The confidence of the sensor is evaluated based on the Lemma 1 with the discount factor $\kappa _{1} =0.5$ and the uncertainty threshold as $\Upsilon _{1} =0.8$.  Fig.~\ref{fig:MJ8} shows the confidence of sensors in the presence of the considered attack which is close to one for healthy sensors and tends to zero for the compromised one.   Then, the belief based proposed resilient estimator is implemented and Fig.~\ref{fig:MJ9} shows the result for the state estimation using the resilient estimator \eqref{ZEqnNum565391}. After the injection of attack, within a few seconds, the sensors reach consensus on the state estimates, i.e., the state estimates of sensors converge to the actual position of the target. The result in Fig.~\ref{fig:MJ9} follows Theorem 6. 
}

\begin{figure}[!t]
	\centering{\includegraphics [width=2.3in] {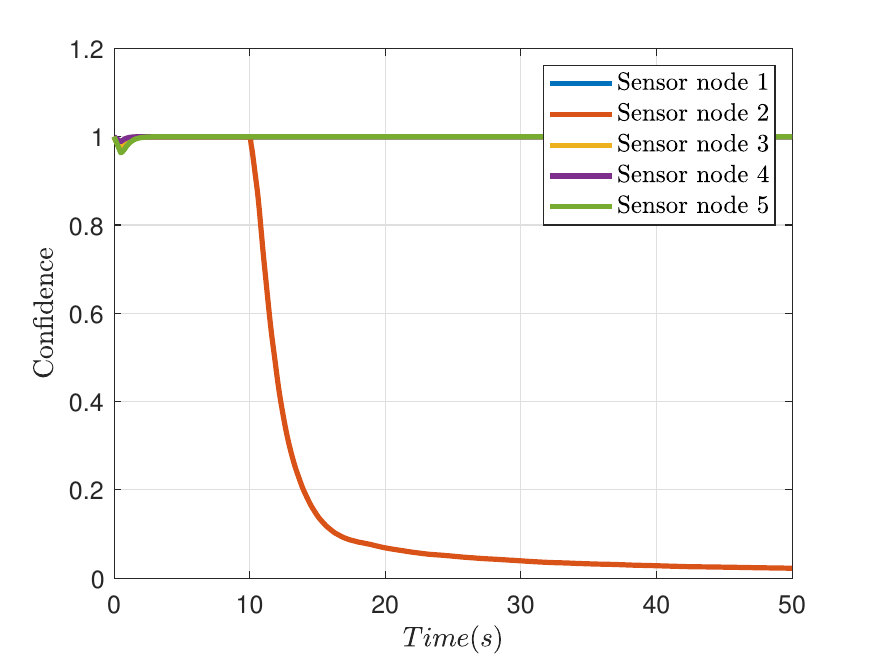}}
	\caption{{Confidence of sensors in the case that sensor node $2$ is under non-triggering attack.}}
	\label{fig:MJ8}
\end{figure}

\begin{figure}[!t]
	\centering{\includegraphics [width=2.3in] {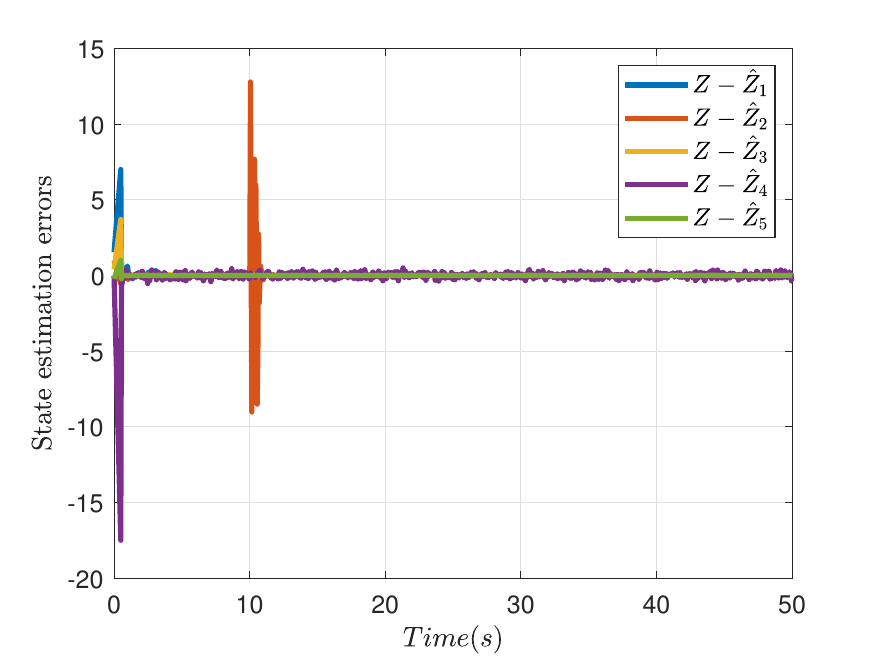}}
	\caption{{State estimation errors under attack on sensor node $2$ using proposed resilient state estimator.}}
	\label{fig:MJ9}
\end{figure}

\vspace{-0.25cm}

\section{ Conclusion}

In this paper, first, we analyze the adverse effects of cyber-physical attacks on the event-triggered distributed Kalman filter (DKF). We show that attacker can adversely affect the performance of the DKF. We also show that the event-triggered mechanism in the DKF can be leveraged by the attacker to result in a non-triggering misbehavior that significantly harms the network connectivity and its collective observability. Then, {to detect adversarial intrusions in the DKF, we relax restrictive Gaussian assumption on probability density functions of attack signals and estimate the Kullback-Leibler (KL) divergence via  $k$-nearest neighbors approach. }Finally, to mitigate attacks, a meta-Bayesian approach is presented that incorporates the outcome of the attack detection mechanism to perform second-order inference and consequently form beliefs over beliefs, i.e., confidence and trust of a sensor. Each sensor communicates its confidence to its neighbors. Sensors then incorporate the confidence of their neighbors and their own trust about their neighbors into their posterior update laws to successfully discard corrupted sensor information. Then, the simulation result illustrates the performance of the presented resilient event-triggered DKF. {Future research will focus on addressing the effect of accuracy of the proposed attack detection mechanism on the proposed mitigation mechanism.}

\vspace{-0.3cm}
\appendices
\section{Proof of Theorem 1}
Note that for the notional simplicity, in the following proof, we keep the sensor index $i$ but ignore the time-indexing $k$. Without the time index, we represent the prior at time $k+1$ as $\bar{x}_{i}^{a} (k+1)\buildrel\Delta\over= (\bar{x}_{i}^{a} )^{+} $ and follow the same for other variables.

Using the process dynamics in \eqref{ZEqnNum820040} and the corrupted prior state estimate in \eqref{ZEqnNum120276}, one has
\vspace{-0.12cm}
\begin{equation} \label{ZEqnNum533668} 
	(\bar{\eta }_{i}^{a} )^{+} =x^{+} -(\bar{x}_{i}^{a} )^{+} =A(x\, -\hat{x}_{i}^{a} )\, +\, w, 
\end{equation} 
where the compromised posterior state estimate $\hat{x}_{i}^{a} (k)$ follows the dynamics \eqref{ZEqnNum120276}. Similarly, using \eqref{ZEqnNum120276}, the corrupted posterior state estimation error becomes
\vspace{-0.12cm}
\begin{equation} \label{ZEqnNum240480} 
	\eta _{i}^{a} =x-\hat{x}_{i}^{a} =x-\bar{x}_{i}^{a} -K_{i}^{a} (y_{i} -C\bar{x}_{i}^{a} )-\gamma _{i} \sum _{j\in N_{i} }(\tilde{x}_{j}^{a} -\tilde{x}_{i}^{a}  )-K_{i}^{a} f_{i} . 
\end{equation} 
\vspace{-0.12cm}
Then, one can write \eqref{ZEqnNum533668}-\eqref{ZEqnNum240480} as
\begin{equation} \label{ZEqnNum404232} 
	\left\{\begin{array}{l} {(\bar{\eta }_{i}^{a} )^{+} =A\eta _{i}^{a} \, +\, w,} \\ {\eta _{i}^{a} =(I_{n} -K_{i}^{a} C_{i} )\bar{\eta }_{i}^{a} -K_{i}^{a} v_{i} +u_{i}^{a} ,} \end{array}\right.  
\end{equation} 
where
\begin{equation} \label{ZEqnNum571757} 
	\vspace{-0.12cm}
	u_{i}^{a} =\gamma _{i} \sum _{j\in N_{i} }(\tilde{\eta }_{j}^{a}  -\tilde{\eta }_{i}^{a} )-K_{i}^{a} f_{i} . 
\end{equation} 

Based on \eqref{ZEqnNum926700}, we define the predictive state estimation error, respectively, under attack as
\vspace{-0.12cm}
\begin{equation} \label{ZEqnNum896894} 
	\begin{array}{l}(\tilde{\eta }_{i}^{a} )^{+} =x^{+} -(\tilde{x}_{i}^{a} )^{+}\\ \, \, \, \, \,\, \,  \, \, \, \,\, \, \, \, \, \,\,\, \, \,  \,\,\, =\zeta _{i}^{+} (\bar{\eta }_{i}^{a} )^{+} +(1-\zeta _{i}^{+} )A\tilde{\eta }_{i}^{a} . \end{array} 
\end{equation} 

Using \eqref{ZEqnNum404232}, the corrupted covariance of the prior state estimation error becomes
\begin{equation} \label{ZEqnNum928831} 
	\begin{array}{l} {(\bar{P}_{i}^{a} )^{+} ={\bf {\rm E}}\left[(\bar{\eta }_{i}^{a} )^{+} ((\bar{\eta }_{i}^{a} )^{+} )^{T} \right],} \\ {\, \, \, \, \,\, \,  \, \, \, \, \,\,\, \, \, \, \, \, \, \, \, \, \, ={\bf {\rm E}}\left[(A\eta _{i}^{a} \, +\, w)(A\eta _{i}^{a} \, +\, w\, )^{T} \right]=A\hat{P}_{i}^{a} A^{T} +Q.} \end{array} 
\end{equation} 

Using the corrupted predictive state estimate error $\, (\tilde{\eta }_{i}^{a} )^{+} $ in \eqref{ZEqnNum896894} with $(\bar{P}_{i,j}^{a} )^{+} =A\hat{P}_{i,j}^{a} A^{T} +Q$, one can write the cross-correlated predictive state estimation error covariance $(\tilde{P}_{i,j}^{a} )^{+} $ as
\vspace{-0.12cm}
\begin{equation} \label{ZEqnNum354214} 
	\begin{array}{l} {(\tilde{P}_{i,j}^{a} )^{+} ={\bf {\rm E}}\left[(\tilde{\eta }_{i}^{a} )^{+} ((\tilde{\eta }_{j}^{a} )^{+} )^{T} \right]} \\ \,\,\,\,\,\,\,\,\,\,\,\,\,\,\,\,\,\,\,\,\,\,\,\,\,\,{=\zeta _{i}^{+} (1-\zeta _{j}^{+} )A(\breve{P}_{i,j}^{a} )^{+} +(1-\zeta _{i}^{+} )\zeta _{j}^{+} (\stackrel{\frown}{P}_{i,j}^{a} )^{+} A^{T} } \\ \,\,\,\,\,\,\,\,\,\,\,\,\,\,\,\,\,\,\,\,\,\,\,\,\,\,\,\,\,\,\,\,\,\,{+\zeta _{i}^{+} \zeta _{j}^{+} (\bar{P}_{i,j}^{a} )^{+} +(1-\zeta _{i}^{+} )(1-\zeta _{j}^{+} )(A\tilde{P}_{i,j}^{a} A^{T} +Q),} \end{array} 
\end{equation} 
where $\stackrel{\frown}{P}_{i,j}^{a} $ and $\breve{P}_{i,j}^{a}$ be the cross-correlated estimation error covariances and their updates are given in \eqref{ZEqnNum358063}-\eqref{ZEqnNum655968}. 

The cross-correlated estimation error covariance $(\stackrel{\frown}{{P}}_{i,j}^{a} )^{+}$ in \eqref{ZEqnNum354214} is given by
\vspace{-0.12cm}
\begin{equation} \label{ZEqnNum358063} 
	\begin{array}{l} {(\stackrel{\frown}{P}_{i,j}^{a} )^{+} ={\bf {\rm E}}\left[(\tilde{\eta }_{i}^{a} )^{+} ((\bar{\eta }_{j}^{a} )^{+} )^{T} \right]} \\ \,\,\,\,\,\,\,\,\,\,\,\,\,\,\,\,\,\,\,\,\,\,\,\,\,\,\,\,\,\,{=\zeta _{i}^{+} (\bar{P}_{i,j}^{a} )^{+} +(1-\zeta _{i}^{+} )A\sum _{r\in N_{i} }(\tilde{P}_{i,r}^{a} -\tilde{P}_{i,j}^{a} )(\gamma _{i}  A)^{T} +} \\ {\,\,\,\,\,\,\,\,\,\,\,\,\,\,\,\,\,\,\,\,\,\,\,\,\,\,\,\,\,\,\,\,\,\,\,\,\,\,\,  (1-\zeta _{i}^{+} )[A\stackrel{\frown}{P}_{i,j}^{a} M_{i}^{a} A^{T} +Q],} \end{array} 
\end{equation} 
where $\tilde{P}_{i,j}^{a}$ and $\breve{P}_{i,j}^{a}$ denote the cross-correlated estimation error covariances evolve according to \eqref{ZEqnNum354214} and \eqref{ZEqnNum655968}. Similarly, $(\breve{P}_{i,j}^{a} )^{+} $ is updated based on the expression given by
\begin{equation} \label{ZEqnNum655968} 
	\begin{array}{l} {(\breve{P}_{i,j}^{a} )^{+} ={\bf {\rm E}}\left[(\bar{\eta }_{i}^{a} )^{+} ((\tilde{\eta }_{j}^{a} )^{+} )^{T} \right]} \\ \,\,\,\,\,\,\,\,\,\,\,\,\,\,\,\,\,\,\,\,\,\,\,\,\,\,\,{={\bf {\rm E}}\left[(\bar{\eta }_{i}^{a} )^{+} (\zeta _{j}^{+} (\bar{\eta }_{j}^{a} )^{+} +(1-\zeta _{j}^{+} )(A\tilde{\eta }_{j}^{a} +w)\, )^{T} \right]} \\ \,\,\,\,\,\,\,\,\,\,\,\,\,\,\,\,\,\,\,\,\,\,\,\,\,\,\,{=\zeta _{j}^{+} (\bar{P}_{i,j}^{a} )^{+} +(1-\zeta _{j}^{+} )[A(M_{i}^{a} )^{T} \stackrel{\frown}{P}_{i,j}^{a} A^{T} +Q]} \\ \,\,\,\,\,\,\,\,\,\,\,\,\,\,\,\,\,\,\,\,\,\,\,\,\,\,\,\,\,\,\,\,\,\,\,{+(1-\zeta _{j}^{+} )A\gamma _{i} \sum _{s\in N_{i} }(\tilde{P}_{s,j}^{a} -\tilde{P}_{i,j}^{a} ) A^{T} .} \end{array} 
\end{equation} 

Now using \eqref{ZEqnNum240480}-\eqref{ZEqnNum896894}, one can write the covariance of posterior estimation error  $\hat{P}_{i}^{a} $ as
\begin{equation} \label{ZEqnNum893608} 
	\begin{array}{l} {\hat{P}_{i}^{a} ={\bf {\rm E}}[M_{i} \bar{\eta }_{i}^{a} (M_{i} \bar{\eta }_{i}^{a} )^{T} ]+{\bf {\rm E}}[K_{i}^{a} v_{i} (K_{i}^{a} v_{i} )^{T} ]}  -2{\bf {\rm E}}[(M_{i} \bar{\eta }_{i}^{a} )(K_{i}^{a} v_{i} )^{T} ]\\ 
			\,\,\,\,\,\,\,\,\,\,\,\,\,-2{\bf {\rm E}}[K_{i}^{a} v_{i} (\gamma _{i} u_{i}^{a} )^{T} ]  {+{\bf {\rm E}}[(\gamma _{i} u_{i}^{a} )(\gamma _{i} u_{i}^{a} )^{T} ]+2{\bf {\rm E}}[(M_{i} \bar{\eta }_{i}^{a} )(\gamma _{i} u_{i}^{a} )^{T} ],} \end{array} 
\end{equation} 
Using \eqref{ZEqnNum928831} and measurement noise covariance, the first two terms of \eqref{ZEqnNum893608} become
\begin{equation} \label{87)} 
	\begin{array}{l} {{\bf {\rm E}}[M_{i} \bar{\eta }_{i}^{a} (M_{i} \bar{\eta }_{i}^{a} )^{T} ]=M_{i} \bar{P}_{i}^{a} M_{i}^{T},}\,\,\,\, \,\, {{\bf {\rm E}}[K_{i}^{a} v_{i} (K_{i}^{a} v_{i} )^{T} ]=K_{i}^{a} R_{i} (K_{i}^{a} )^{T}_.} \end{array} 
\end{equation} 
According to Assumption 1, the measurement noise $v_{i} $ is i.i.d. and uncorrelated with state estimation errors, therefore, the third and fourth terms in \eqref{ZEqnNum893608} become zero.  Now $u_{i}^{a} $ in \eqref{ZEqnNum571757} and Assumption 1, the last two terms in \eqref{ZEqnNum893608} can be simplified as 
\vspace{-0.2cm}
\begin{equation} \label{88)} 
	\begin{array}{l} {{\bf {\rm E}}[(u_{i}^{a} )(u_{i}^{a} )^{T} ]} =\gamma _{i} {}^{2} ({\bf {\rm E}}\left[[\sum _{j\in N_{i} }(\tilde{\eta }_{j}^{a}  -\tilde{\eta }_{i}^{a} )][\sum _{j\in N_{i} }(\tilde{\eta }_{j}^{a}  -\tilde{\eta }_{i}^{a} ))]^{T} \right]\\ \,\,\,\,\,\,\,\,\,\,\,\,\,\,\,\,\,\,\,\,\,\,\,\,\,\,\,\,\,\,\,\,\,\,\,\,\,\,\,\,\,\,\,\,\,\,\,\,\,+{\bf {\rm E}}[K_{i}^{a} f_{i} (K_{i}^{a} f_{i} )^{T} ] {-2K_{i}^{a} {\bf {\rm E}}[f_{i} \sum _{j\in N_{i} }(\tilde{\eta }_{j}^{a}  -\tilde{\eta }_{i}^{a} )^{T} ]),} \\ \,\,\,\,\,\,\,\,\,\,\,\,\,\,\,\,\,\,\,\,\,\,\,\,\,\,\,\,\,\,\,\,\,\,\,\,\,\,\,\,\,\,\,\,\,\,=\gamma _{i} {}^{2} (\sum _{j\in N_{i} }(\tilde{P}_{j}^{a}  -2\tilde{P}_{i,j}^{a} +\tilde{P}_{i}^{a} )+K_{i}^{a} \Sigma _{i}^{f} (K_{i}^{a} )^{T}\\ \,\,\,\,\,\,\,\,\,\,\,\,\,\,\,\,\,\,\,\,\,\,\,\,\,\,\,\,\,\,\,\,\,\,\,\,\,\,\,\,\,\,\,\,\,\,\,\,\,\,\,\,\, -2K_{i}^{a} {\bf {\rm E}}[f_{i} \sum _{j\in N_{i} }(\tilde{\eta }_{j}^{a}  -\tilde{\eta }_{i}^{a} )^{T} ]), \end{array} 
\end{equation} 
\vspace{-0.1cm}
and 
\begin{equation} \label{ZEqnNum612155} 
	\begin{array}{l} {2{\bf {\rm E}}[(u_{i}^{a} )(M_{i} \bar{\eta }_{i}^{a} )^{T} ]=2{\bf {\rm E}}[(\gamma _{i} \sum _{j\in N_{i} }(\tilde{\eta }_{j}^{a}  -\tilde{\eta }_{i}^{a} )-K_{i}^{a} f_{i} )(M_{i} \bar{\eta }_{i}^{a} )^{T} ],} \\ \,\,\,\,\,\,\,\,\,\,\,\,\,\,\,\,\,\,\,\,\,\,\,\,\,\,\,\,\,\,\,\,\,\,\,\,\,\,\,\,\,\,\,\,\,\,{=2\gamma _{i} \sum _{j\in N_{i} }(\stackrel{\frown}{P}_{i,j}^{a}  -\stackrel{\frown}{P}_{i}^{a} )M_{i}^{T} -2K_{i}^{a} {\bf {\rm E}}[f_{i} (\bar{\eta }_{i}^{a} )^{T} ]M_{i}^{T},} \end{array} 
\end{equation} 
where the cross-correlated term $\stackrel{\frown}{P}_{i,j}^{a} $ is updated according to \eqref{ZEqnNum358063}. Using \eqref{ZEqnNum893608}-\eqref{ZEqnNum612155}, the posterior state estimation error $P_{i}^{a} $ under attacks is given by
\vspace{-0.2cm}
\begin{equation} \label{90)}\nonumber
	\begin{array}{l} {\hat{P}_{i}^{a} =M_{i}^{a} \bar{P}_{i}^{a} (M_{i}^{a} )^{T} +K_{i}^{a} [R_{i} +\Sigma _{i}^{f} ](K_{i}^{a} )^{T} -2K_{i}^{a} \Xi _{f} } \\ \,\,\,\,\,\,\,\,\,\,\,{+2\gamma _{i} \sum _{j\in N_{i} }(\stackrel{\frown}{P}_{i,j}^{a}  -\stackrel{\frown}{P}_{i}^{a} )(M_{i}^{a} )^{T} +\gamma _{i} {}^{2} (\sum _{j\in N_{i} }(\tilde{P}_{j}^{a}  -2\tilde{P}_{i,j}^{a} +\tilde{P}_{i}^{a} ),} \end{array} 
\end{equation} 
with $\Xi _{f} =[{\bf {\rm E}}[f_{i} \sum _{j\in N_{i} }(\tilde{\eta }_{j}^{a} -\tilde{\eta }_{i}^{a} )^{T}  ])+{\bf {\rm E}}[f_{i} (\bar{\eta }_{i}^{a} )^{T} ](M_{i}^{a} )^{T} ].$ This completes the proof.

\vspace{-0.3cm}

\end{document}